\providecommand{\U}[1]{\protect\rule{.1in}{.1in}}
\newtheorem{theorem}{Theorem}
\newtheorem{acknowledgement}[theorem]{Acknowledgement}
\newtheorem{definition}[theorem]{Definition}
\newtheorem{proposition}[theorem]{Proposition}
\newtheorem{remark}[theorem]{Remark}
\newenvironment{proof}[1][Proof]{\noindent\textbf{#1.} }{\ \rule{0.5em}{0.5em}}
\begin{document}

\title{A Clifford Bundle Approach to the Wave Equation of a Spin $1/2$ Fermion in the
de Sitter Manifold.}
\author{{\footnotesize W. A. Rodrigues Jr.}$^{(1)}${\footnotesize , S. A.
Wainer}$^{{\footnotesize (1)}}${\footnotesize , M. Rivera-Tapia}$^{(2)}%
${\footnotesize , E. A. Notte-Cuello}$^{(3)}${\footnotesize and I.
Kondrashuk}$^{(4)}$\\$^{(1)}${\footnotesize Institute of Mathematics, Statistics and Scientific
Computation}\\{\footnotesize IMECC-UNICAMP}\\{\footnotesize walrod@ime.unicamp.br~~~~samuelwainer@ime.unicamp.br}\\$^{(2)}${\footnotesize Departamento de F\'{\i}sica, Universidad de La Serena,
La Serena-Chile.}\\{\footnotesize marivera@userena.cl}\\$^{(3)}${\footnotesize Departamento de Matematicas, Universidad de La Serena,
La Serena-Chile.}\\{\footnotesize enotte@userena.cl}\\$^{(4)}${\footnotesize Grupo de Matem\'{a}tica Aplicada, Departamento de
Ciencias B\'{a}sicas,}\\{\footnotesize Universidad del B\'{\i}o-B\'{\i}o, Campus Fernando May, Casilla
447, Chill\'{a}n, Chile\ }\\{\footnotesize igor.kondrashuk@ubiobio.cl}}
\date{July 19 2015}
\maketitle

\begin{abstract}
In this paper we give a Clifford bundle motivated approach to the wave
equation of a free spin $1/2$ fermion in the de Sitter manifold, a brane with
topology $M=\mathrm{S0}(4,1)/\mathrm{S0}(3,1)$ living in the bulk spacetime
$\mathbb{R}^{4,1}=(\mathring{M}=\mathbb{R}^{5},\boldsymbol{\mathring{g}})$ and
equipped with a metric field $\boldsymbol{g}:\boldsymbol{=}-\boldsymbol{i}%
^{\ast}\boldsymbol{\mathring{g}}$ with $\boldsymbol{i}:M\rightarrow
\mathring{M}$ being the inclusion map. To obtain the analog of Dirac equation
in Minkowski spacetime in the structure $\mathring{M}$ we appropriately
factorize the two Casimir invariants $C_{1}$ and $C_{2}$ of the Lie algebra of
the de Sitter group using the constraint given in the linearization of $C_{2}$
as input to linearize $C_{1}$. In this way we obtain an equation that we
called \textbf{DHESS1, }which in previous studies by other authors was simply
postulated.$.$Next we derive a wave equation (called \textbf{DHESS2}) for a
free spin $1/2$ fermion in the de Sitter manifold using a heuristic argument
which is an obvious generalization of a heuristic argument (described in
detail in Appendix D) permitting a derivation of the Dirac equation in
Minkowski spacetime and which shows that such famous equation express nothing
more than the fact that the momentum of a free particle is a constant vector
field over timelike \ integral curves of a given velocity field. It is a
remarkable fact that \textbf{DHESS1 }and \textbf{DHESS2}\ coincide. One of the
main ingredients in our paper is the use of the concept of Dirac-Hestenes
spinor fields. Appendices B and C recall this concept and its relation with
covariant Dirac spinor fields usually used by physicists.

\end{abstract}

\textbf{Keywords: }de Sitter Manifold,\textbf{ }Clifford Bundle, Dirac Equation.

\section{Introduction}

The Dirac equation (\textbf{DE}) in a Minkowski spacetime can be obtained
using Dirac's original procedure through a linearization of $C_{1}-m^{2}=0$
(where~$C_{1}$ is the first Casimir invariant of the enveloping algebra of the
Poincar\'{e} group) and its application to covariant spinor fields (sections
of $P_{\mathrm{Spin}_{1,3}^{e}}\times_{\mu}\mathbb{C}^{4}$, see Appendices B
and C).\ In the Appendix D using the Clifford and spin-Clifford bundles
formalism\footnote{This means the Clifford and spin-Clifford bundles formalism
as developed in \cite{rc2007}. We use the notations of that book and the
reader is invited to consult the book if he needs to improve his knowledge in
order to be able to follow all calculations of the present article.} and an
almost trivial heuristic argument we present a derivation of an equivalent
equation\ to \textbf{DE }which is called the Dirac-Hestenes equation
(\textbf{DHE}). Our derivation makes clear the fact that the \textbf{DE} (or
the equivalent \textbf{DHE}) express nothing more than the fact that a free
spin $1/2$ particle moves with a constant velocity in Minkowski spacetime
following an integral line of a well defined velocity field\footnote{The way
in which the intrinsic spin of the particle is treated in this formalism has
ben carefully discussed in \cite{rvp1996}.}. This observation is a crucial one
for the main objective of this paper, the one of writing wave equations for a
free spin $1/2$ moving in a de Sitter manifold equipped with a metric field
inherited from a bulk spacetime $\mathbb{R}^{4,1}$ (see Section 2). It is
intuitive (given the topology of the de Sitter manifold) that such a motion
must happen with a constant angular momentum\footnote{On this respect see also
section XIV.3 of \cite{arc}.} and as we will see a heuristic deduction of a
Dirac-Hestenes like equation in this case results identical from the one which
we get if we linearize $C_{1}-m^{2}=0$ (where $C_{1}$ is the first Casimir
invariant of the enveloping algebra of the Lie algebra of the de Sitter group)
taking into account a a constraint coming from the linearization of \ $C_{2}$,
the second Casimir invariant of the enveloping algebra of the Lie algebra of
the de Sitter group

To be more precise, in Sections 3 \ and 4 we will present two
Dirac-Hestenes\ like equations for a spin $1/2$ fermion field\footnote{Called
in what follows a Dirac-Hestenes spinor field and denoted \textbf{DHSF}$.$}
$\phi$ living in de Sitter manifold equipped with a metric field
$\boldsymbol{g}$ (see Section 2), which will be abbreviate as \textbf{DHESS1}
and \textbf{DHESS2}. The \textbf{DHESS1} will be obtained by linearizing the
first Casimir operator $C_{1}$ using a constraint imposed on the \textbf{DHSF}
arising from the linearization of $C_{2}$. On the other hand \textbf{DHESS2}
will be obtained by a physically and heuristically derivation resulting by
simply imposing that the motion of a free particle in the de Sitter manifold
is described by a constant angular momentum $2$-form as seem by an
hypothetical observer living in the bulk spacetime $\mathbb{R}^{4,1}$. Of
course, as we are going to see the heuristic derivation is only possible using
the Clifford bundle formalism. It is a remarkable result that \textbf{DHESS1
}and \textbf{DHESS2} coincide and moreover translation of those equations in
the covariant spinor field formalism gives a first order partial differential
equation (which is equivalent to the one first postulated by Dirac
\cite{dirac}). It will be shown that \textbf{DHESS1 (}and
thus\ \textbf{DHESS2}) reduces to the Dirac-Hestenes equation (\textbf{DHE)}
in Minkowski spacetime when $\ell\rightarrow\infty$, where $\ell$ is the
radius of the de Sitter manifold.

In Section 5 \ we study the limit of \textbf{DHESS1} and \textbf{DHESS2 }when
$\ell\rightarrow\infty$ ($\ell$ being the radius of the de Sitter
manifold)showing that it gives the Dirac-Hestenes equation in Minkowski spacetime.

In Section 6 we present our conclusions, comparing our results with others
already appearing in the literature .We claim that our approach reveals
details of subject that are completely hidden in the usual matrix approach to
the subject\footnote{See, \cite{dirac,gursey}.} where Dirac fields are seen as
mappings $\Phi:M\rightarrow\mathbb{C}^{4}$, in particular the nature of the
object $\boldsymbol{\lambda}$ (Eq.(\ref{c19aa}) appearing in the linearization
of $C_{1}$ and related (but not equal) the mass of the particle.

The paper has four Appendices. Appendix A recalls the Lie algebra and Casimir
invariants of the de Sitter group. Appendices B and C\ have been written for
the reader's convenience since the subject is not well known to physicists.
Those appendices recall the main definitions and properties of the theory of
\textbf{DHSF} necessary for a complete intelligibility of the paper. Finally,
Appendix D presents a heuristic derivation of Dirac equation in Minkowski
spacetime that served as inspiration for the theory presented in the main text.

\section{The Lorentzian de Sitter $M^{dSL}$ Structure and its (Projective)
Conformal Representation}

Let $SO(4,1)$ and $SO(3,1)$ be respectively the special pseudo-orthogonal
groups in the structures $\mathbb{R}^{4,1}=\{\mathring{M}=\mathbb{R}%
^{5},\boldsymbol{\mathring{g}}\}$ and $\mathbb{R}^{3,1}=\{\mathbb{R}%
^{4},-\boldsymbol{\eta}\}$ where $\boldsymbol{\mathring{g}}$ is a metric of
signature $(4,1)$ and $\boldsymbol{\eta}$ a metric of signature $(1,3)$.\ The
manifold $M=SO(4,1)/SO(3,1)$ will be called the \emph{de Sitter manifold}.
Since
\begin{equation}
M=SO(4,1)/SO(3,1)\approx SO(1,4)/SO(1,3)\approx\mathbb{R\times}S^{3} \label{0}%
\end{equation}
this manifold can be viewed as a brane \cite{rw2014} (a submanifold) in the
structure $\mathbb{R}^{4,1}$. In General Relativity studies it is introduced a
Lorentzian spacetime, i.e., the structure $M^{dSL}=(M=\mathbb{R\times}%
S^{3},\boldsymbol{g},\boldsymbol{D},\tau_{\boldsymbol{g}},\uparrow)$
called\emph{ Lorentzian de Sitter spacetime structure\footnote{It is a vacuum
solution of Einstein equation with a cosmological constant term. We are not
going to use this structure in this paper.}} where if $\boldsymbol{\iota
}:\mathbb{R\times}S^{3}\rightarrow\mathbb{R}^{5}$ is the inclusion mapping,
$\boldsymbol{g}:=-\iota^{\ast}\boldsymbol{\mathring{g}}$ and $\boldsymbol{D}$
is the parallel projection on $M$ of the pseudo Euclidian metric compatible
connection in $\mathbb{R}^{4,1}$ (details in \cite{rw2015}). As well known,
$M^{dSL}$ is a spacetime of constant Riemannian curvature. It has ten Killing
vector fields. The Killing vector fields are the generators of infinitesimal
actions of the group $SO(4,1)$ (called the de Sitter group) in
$M=\mathbb{R\times}S^{3}\approx SO(4,1)/SO(3,1)$. The group $SO(4,1)$ acts
transitively\footnote{A group $G$ of transformations in a manifold $M$
($\sigma:G\times M\rightarrow M$ by $(g,x)\mapsto\sigma(g,x)$) is said to act
transitively on $M$ if for arbitraries $x,y\in M$ there exists $g\in G$ such
that $\sigma(g,x)=y$.} in $SO(4,1)/SO(3,1)$, which is thus a homogeneous space
(for $SO(4,1)$).

We now give a description of the manifold $\mathbb{R\times}S^{3}$ as a
pseudo-sphere (a submanifold) of radius $\ell$ of the pseudo Euclidean space
$\mathbb{R}^{4,1}=\{\mathbb{R}^{5},\boldsymbol{\mathring{g}}\}$. If
$(X^{1},X^{2},X^{3},X^{4},X^{0})$ are the global orthogonal coordinates of
$\mathbb{R}^{4,1}$, then the equation representing the pseudo sphere is%
\begin{equation}
(X^{1})^{2}+(X^{2})^{2}+(X^{3})^{2}+(X^{4})^{2}-(X^{0})^{2}=\ell^{2}.
\label{ds4}%
\end{equation}

Introducing projective \emph{conformal} coordinates $\{x^{\mu}\}$ by
projecting the points of $\mathbb{R\times}S^{3}$ from the \textquotedblleft
north-pole\textquotedblright\ to a plane tangent to the \textquotedblleft
south pole\textquotedblright\ we see immediately that $\{x^{\mu}\}$
covers\ all $\mathbb{R\times}S^{3}$ except the \textquotedblleft
north-pole\textquotedblright. We have
\begin{equation}
X^{\mu}=\Omega x^{\mu},~~~X^{4}=-\ell\Omega\left(  1+\frac{\sigma^{2}}%
{4\ell^{2}}\right)  \label{ds4a}%
\end{equation}
with
\begin{equation}
\Omega=\left(  1-\frac{\sigma^{2}}{4\ell^{2}}\right)  ^{-1},~~~\sigma^{2}%
=\eta_{\mu\nu}x^{\mu}x^{\nu} \label{ds2}%
\end{equation}
and we immediately find that%

\begin{equation}
\boldsymbol{g}:=-\boldsymbol{\iota}^{\ast}\boldsymbol{\mathring{g}}=\Omega
^{2}\eta_{\mu\nu}dx^{\mu}\otimes dx^{\nu}, \label{ds1}%
\end{equation}
and the matrix with entries $\eta_{\mu\nu}$ is the diagonal matrix
$\mathrm{diag}(1,-1,-1,-1)$.

Since the north pole of the pseudo sphere is not covered by the coordinate
functions we see that (omitting two dimensions) the region of the spacetime as
seem by an observer living the south pole is the region inside the so called
absolute of \emph{Cayley-Klein} of equation
\begin{equation}
t^{2}-\mathbf{x}^{2}=4\ell^{2}. \label{ds5}%
\end{equation}
%

\begin{figure}[ptb]%
\centering
\includegraphics[
height=4.1009in,
width=4.4538in
]%
{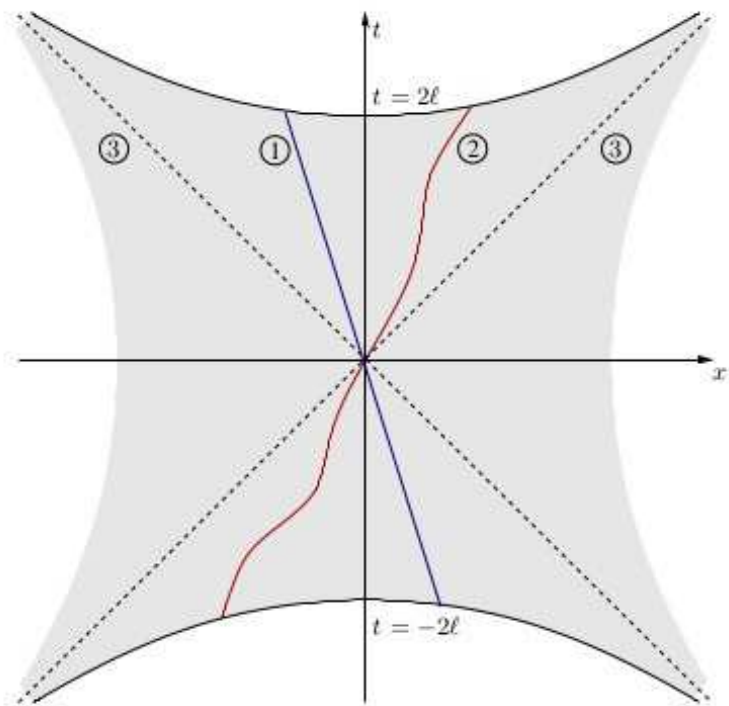}%
\caption{Projective conformal representation of de Sitter spacetime. Note that
the \textquotedblleft observer\textquotedblright\ spacetime is the interior of
the Cayley-Klein absolute $t^{2}-\mathbf{x}^{2}=4\ell^{2}$. }%
\end{figure}

In Figure 1 we can see that all timelike curves (2) and lightlike (1) starts
in the \textquotedblleft past horizon\textquotedblright\ and end on the
\textquotedblleft future horizon\textquotedblright. More details in
\cite{rw2015}.

\section{Linearization of the Casimir Invariants of the spin$_{4,1}^{e}$ Lie
algebra}

The classical angular momentum biform of a free particle following a
\textquotedblleft timelike\textquotedblright\ curve $\sigma$ with momentum
$1$-form $\boldsymbol{p}$ in the structure $\mathbb{R}^{4,1}$ is%
\begin{equation}
\boldsymbol{l=x}\wedge\boldsymbol{p}, \label{c0}%
\end{equation}
where
\begin{equation}
\boldsymbol{x=X}^{A}\boldsymbol{\mathring{E}}_{A},~~~~~\boldsymbol{p}%
=P_{A}\boldsymbol{\mathring{E}}^{A} \label{c3}%
\end{equation}
are respectively the position 1-form and the momentum of the free particle.
Moreover, $\{\boldsymbol{\mathring{E}}^{A}=dX^{A}\}$ is an orthonormal cobasis
of $T^{\ast}\mathring{M}$ dual to the orthonormal basis
$\{\boldsymbol{\mathring{e}}_{A}=\frac{\partial}{\partial X^{A}}\}$ of
$T\mathring{M}$. and $\{\boldsymbol{\mathring{E}}_{A}\}$ is an orthonormal
cobasis of $T^{\ast}\mathring{M},$ called the reciprocal basis of
$\{\boldsymbol{\mathring{E}}^{A}\}$ and it is $\mathtt{\mathring{g}%
}(\boldsymbol{\mathring{E}}^{A},\boldsymbol{\mathring{E}}_{B})=\delta_{B}^{A}$
where\footnote{The matrix with entries $\eta_{AB}$ is the diagonal matrix
$\mathrm{diag}(1,1,1,1,-1)$}%
\begin{equation}
\mathtt{\mathring{g}}=\eta^{AB}\boldsymbol{\mathring{e}}_{A}\otimes
\boldsymbol{\mathring{e}}_{B} \label{c000}%
\end{equation}
is the metric for $T^{\ast}\mathring{M}$. If
\begin{equation}
\boldsymbol{\mathring{g}}=\eta_{AB}\boldsymbol{\mathring{E}}^{A}%
\otimes\boldsymbol{\mathring{E}}^{B} \label{c00}%
\end{equation}
is the metric of $T\mathring{M}$, it is $\eta^{AC}\eta_{CB}=\delta_{B}^{A}$.
We have
\begin{equation}
\boldsymbol{l}=\frac{1}{2}L_{AB}\boldsymbol{\mathring{E}}^{A}\wedge
\boldsymbol{\mathring{E}}^{B}=\frac{1}{2}L_{AB}\boldsymbol{\mathring{E}}%
^{A}\boldsymbol{\mathring{E}}^{B} \label{c1}%
\end{equation}
with
\begin{equation}
L_{AB}=\eta_{AC}X^{C}P_{B}-\eta_{BC}X^{C}P_{A} \label{c2}%
\end{equation}

\begin{remark}
It is quite obvious that for a classical particle living in de Sitter
spacetime and following a timelike worldline $\sigma$ parametrized by proper
time $\tau$ if we write \emph{\cite{roldao1,roldao2}}%
\begin{equation}
\boldsymbol{x}=X^{A}(\tau)\boldsymbol{\mathring{E}}_{A}\text{,~~~~}%
\boldsymbol{p}=m\frac{dX^{A}(\tau)}{d\tau}\boldsymbol{\mathring{E}}_{A}
\label{C22}%
\end{equation}
it is \emph{(}since $\boldsymbol{x\cdot x=\mathring{g}}(\boldsymbol{x}%
,\boldsymbol{x})=\ell^{2}$\emph{)}%
\begin{equation}
\boldsymbol{x\cdot p}=0. \label{c222}%
\end{equation}
Thus,
\begin{equation}
\boldsymbol{xp=x}\wedge\boldsymbol{p} \label{c223}%
\end{equation}
and as a consequence
\begin{equation}
\boldsymbol{l}^{2}=\boldsymbol{xpxp}=-\boldsymbol{pxxp}=-\ell^{2}%
\boldsymbol{p}^{2} \label{c224}%
\end{equation}
which implies that
\begin{equation}
\boldsymbol{l\wedge l}=0 \label{c225}%
\end{equation}
and thus
\begin{equation}
\boldsymbol{l}^{2}=\boldsymbol{l\lrcorner l}\text{.} \label{c226}%
\end{equation}
As we re going to see the classical condition given by \emph{Eq.(\ref{c225})}
cannot be assumed in quantum theory where the classical angular momentum is
substituted by a quantum angular momentum operator.
\end{remark}

So, to continue we define $\mathcal{\mathring{H}}$, the Hilbert space of a one
quantum spin $1/2$ particle living $\mathring{M}$ as the set of all square
integrable mappings\footnote{By square integrable we mean that $\int\psi
\cdot\psi\tau_{\boldsymbol{\mathring{g}}}=1$.}%
\begin{equation}
\phi\in\sec\mathcal{C\ell}^{0}\mathcal{(}\mathring{M},\mathtt{\mathring{g}%
}\mathcal{)} \label{c4}%
\end{equation}
called representatives in $\mathcal{C\ell}^{0}\mathcal{(}\mathring
{M},\mathtt{\mathring{g}}\mathcal{)}$ relative to a spin coframe of
(\textbf{DHSF})\footnote{See details in Appendix B and C.} \cite{hs1984}.

The quantum angular momentum operator $\mathbf{L}\in\mathcal{L(\mathring{H})}$
is
\begin{equation}
\mathbf{L:}=\frac{1}{2}\boldsymbol{\mathring{E}}^{A}\boldsymbol{\mathring{E}%
}^{B}\mathbf{L}_{AB} \label{c6}%
\end{equation}
where
\begin{equation}
\mathbf{L}_{AB}=\eta_{AC}X^{C}\mathbf{P}_{B}-\eta_{BC}X^{C}\mathbf{P}_{A}
\label{c7}%
\end{equation}
with $\mathbf{P}_{A}\in\mathcal{L(\mathring{H})}$ defined\footnote{The
definition of the operators $\mathbf{P}_{A}$ acting on tangent spinor fields
to the de Sitter manifold is given below in Eq.(\ref{c36}).} by%
\begin{equation}
\mathbf{P}_{A}\phi:=\partial_{A}\phi\boldsymbol{\mathring{E}}^{2}%
\boldsymbol{\mathring{E}}^{1} \label{c7a}%
\end{equation}

Now\footnote{The action of $\mathbf{L}^{2}$ on $\phi$ \ (in analogy to the
square of the Dirac operator) is defined by
\par
$\mathbf{L}^{2}\phi:=\mathbf{L}(\mathbf{L}\phi)=(\mathbf{L}\lrcorner
\mathbf{L)}\phi+(\mathbf{L}\wedge\mathbf{L)}\phi.$},
\begin{equation}
\mathbf{L}^{2}=\mathbf{L\lrcorner L+L\wedge L.} \label{c9}%
\end{equation}

$\mathbf{L\lrcorner L}$ is clearly a scalar invariant under the action of
$\mathrm{Spin}_{4,1}^{e}$ group and it is:%
\begin{align}
\mathbf{L\lrcorner L}  &  =\frac{1}{4}L_{AB}L^{CD}(\boldsymbol{\mathring{E}%
}^{A}\wedge\boldsymbol{\mathring{E}}^{B})\lrcorner(\boldsymbol{\mathring{E}%
}_{C}\wedge\boldsymbol{\mathring{E}}_{D})\nonumber\\
&  =\frac{1}{4}L_{AB}L^{CD}\boldsymbol{\mathring{E}}^{A}\lrcorner
(\boldsymbol{\mathring{E}}^{B}\lrcorner(\boldsymbol{\mathring{E}}_{C}%
\wedge\boldsymbol{\mathring{E}}_{D}))\nonumber\\
&  =\frac{1}{4}L_{AB}L^{CD}\boldsymbol{\mathring{E}}^{A}\lrcorner(\delta
_{C}^{B}\boldsymbol{\mathring{E}}_{D}-\delta_{D}^{B}\boldsymbol{\mathring{E}%
}_{C})\nonumber\\
&  =\frac{1}{4}L_{AB}L^{CD}(\delta_{C}^{B}\delta_{D}^{A}-\delta_{D}^{B}%
\delta_{C}^{A})\nonumber\\
&  =-\frac{1}{2}L_{AB}L^{AB}. \label{c10}%
\end{align}
The first Casimir operator of the Lie algebra\ $\mathrm{spin}_{4,1}^{e}$\ is
defined by
\begin{equation}
C_{1}=\frac{1}{\ell^{2}}\mathbf{L\lrcorner L}=-\frac{1}{\ell^{2}%
}\mathbf{L\cdot L=}-\frac{1}{2\ell^{2}}L_{AB}L^{AB}=m^{2}, \label{c11}%
\end{equation}
with $m^{2}\in\mathbb{R}$.

We have the

\begin{proposition}
\label{casimir2}Call
\begin{equation}
\mathbf{W:=}\star\frac{1}{8\ell}(\mathbf{L\wedge L})=\frac{1}{8\ell
}(\mathbf{L\wedge L})\lrcorner\tau. \label{c11a}%
\end{equation}
Then,%

\begin{align}
\mathbf{W\cdot W}  &  =\mathbf{WW}=-\frac{1}{64\ell^{2}}(\mathbf{L\wedge
L})\lrcorner(\mathbf{L\wedge L})=-\frac{1}{64\ell^{2}}(\mathbf{L\wedge
L})\cdot(\mathbf{L\wedge L})\nonumber\\
&  =-\frac{1}{64\ell^{2}}(\mathbf{L\wedge L})(\mathbf{L\wedge L}). \label{c12}%
\end{align}

\end{proposition}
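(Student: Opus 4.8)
The plan is to reduce the whole statement to two elementary facts about the $5$-dimensional structure $\mathbb{R}^{4,1}$: the behaviour of the Hodge star on $4$-forms, and the value of $\tau^{2}$ in signature $(4,1)$. The preliminary observations are that $\mathbf{L}\wedge\mathbf{L}$ is a homogeneous multiform of degree $4$ — it is the grade-$4$ part of the Clifford product $\mathbf{L}\mathbf{L}$ of the $2$-form $\mathbf{L}$ with itself, see Eq.(\ref{c9}) — and that therefore $\mathbf{W}=\star\tfrac{1}{8\ell}(\mathbf{L}\wedge\mathbf{L})$ is a $1$-form, since here $\star$ interchanges $p$-forms with $(5-p)$-forms. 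Granting this, the equality $\mathbf{W}\cdot\mathbf{W}=\mathbf{W}\mathbf{W}$ is immediate, because the Clifford product of a $1$-form $a$ with itself has no bivector part: $a\,a=a\cdot a+a\wedge a=a\cdot a$.

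The next step is to turn the Hodge dual into a bare Clifford product. Put $F:=\mathbf{L}\wedge\mathbf{L}$. Since $\tau$ is the volume $5$-form, the contraction $F\lrcorner\tau$ of the $4$-form $F$ with it is already homogeneous of grade $1$, so $\star F=F\lrcorner\tau=F\tau$; moreover, because $n=5$ is odd the pseudoscalar $\tau$ is central in the Clifford algebra $\mathcal{C\ell}_{4,1}$, hence it commutes with $F$ (and, being a pure Clifford element, it is untouched by the operator coefficients). Therefore
\[
\mathbf{W}\mathbf{W}=\frac{1}{64\ell^{2}}\,(F\tau)(F\tau)=\frac{1}{64\ell^{2}}\,F F\,\tau^{2}.
\]
It remains to evaluate $\tau^{2}$. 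For $\mathcal{C\ell}_{p,q}$ with $n=p+q$ one has $\tau^{2}=(-1)^{n(n-1)/2}(-1)^{q}$, and for $(p,q)=(4,1)$ — recall $\eta_{AB}=\mathrm{diag}(1,1,1,1,-1)$ — this equals $(-1)^{10}(-1)^{1}=-1$. Thus $\mathbf{W}\mathbf{W}=-\tfrac{1}{64\ell^{2}}(\mathbf{L}\wedge\mathbf{L})(\mathbf{L}\wedge\mathbf{L})$, which is the last equality in the statement.

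To close the chain of four equal expressions I would note that $F$, being of degree $4$ in the $5$-dimensional structure, is itself a Hodge dual, $F=\star c$ with $c:=\star^{-1}F$ a $1$-form; running the same manipulation in reverse gives $F F=(\star c)(\star c)=(c\,c)\,\tau^{2}$, so $F F$ reduces to its scalar part $F\cdot F$, and for homogeneous multiforms of equal degree $F\cdot F=F\lrcorner F$. Chaining these identifications with the previous paragraph produces the full string displayed in the proposition.

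The step requiring genuine care — and the one I expect to be the real obstacle — is keeping the Clifford grading and the non-commuting operator coefficients $\mathbf{L}_{AB}$ straight: one must perform the grade manipulations above over the Clifford algebra $\mathcal{C\ell}_{4,1}$ with the operators carried along in a fixed order, the two factors $\mathbf{L}\wedge\mathbf{L}$ always occurring in the same order so that no reordering of the $\mathbf{L}_{AB}$ is ever demanded, exactly as in the computations (\ref{c10})--(\ref{c11}) of this section and in the conventions of \cite{rc2007}. One should also double-check the two convention-dependent inputs used above — the normalization of $\star$ (left versus right multiplication by $\tau$, and whether a reversion is inserted) and the sign of $\tau^{2}$ for the signature $(4,1)$ — but with the conventions fixed as in \cite{rc2007} these are precisely as used, and none of the choices alters the final identity.
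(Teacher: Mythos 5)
Your argument is correct and ultimately rests on the same two facts as the paper's proof --- the centrality of the pseudoscalar $\tau$ in odd dimension and $\tau^{2}=-1$ for signature $(4,1)$ --- but it routes one link of the chain differently. The paper obtains $(\mathbf{L\wedge L})\lrcorner(\mathbf{L\wedge L})=-64\ell^{2}\,\mathbf{W\cdot W}$ by inserting $\star^{-1}\star$ and invoking the duality-flip identity $A_{l}\lrcorner\star B_{s}=B_{s}\lrcorner\star A_{l}$ (for $l+s=n$) together with $\star^{-1}A_{1}=-\star A_{1}$, and only afterwards uses $(\star\mathbf{W})(\star\mathbf{W})=\mathbf{W}\tau\tau\mathbf{W}=-\mathbf{WW}$, which is precisely your computation $(F\tau)(F\tau)=FF\tau^{2}$. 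You bypass the flip identity entirely by writing $F=\mathbf{L\wedge L}=c\,\tau$ with $c$ a $1$-form and squaring; this is marginally more self-contained, since it needs nothing beyond $\tau$ being central and $\tau^{2}=-1$, and your sign bookkeeping (the reversion being trivial on $4$-forms, $F\lrcorner F=F\cdot F$ in grade $4$) matches the conventions of \cite{rc2007}. One remark applies equally to your proof and to the paper's: the step $\mathbf{WW}=\mathbf{W\cdot W}$ (equivalently, that $FF$ has no grade-$2$ part) is not automatic here, because $\mathbf{W}$ is a $1$-form with \emph{operator} coefficients and $\mathbf{W\wedge W}$ vanishes only if those coefficients commute; the paper asserts this silently in the last equality of Eq.(\ref{c15a}), and you assert it in your opening paragraph, though you do at least flag the non-commuting coefficients as the delicate point.
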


\begin{proof}
Recalling the identity \footnote{See \cite{rc2007}, page 33.}
\begin{gather}
A_{l}\lrcorner\star B_{s}=B_{s}\lrcorner\star A_{l}~~\text{ if }%
l+s=n=\dim\mathring{M},\nonumber\\
A_{l}\in\sec%
{\textstyle\bigwedge\nolimits^{l}}
T^{\ast}\mathring{M}\hookrightarrow\mathcal{C\ell(}\mathring{M}%
,\mathtt{\mathring{g}}\mathcal{)~~}\text{ }B_{s}\in\sec%
{\textstyle\bigwedge\nolimits^{s}}
T^{\ast}\mathring{M}\hookrightarrow\mathcal{C\ell(}\mathring{M}%
,\mathtt{\mathring{g}}\mathcal{)} \label{C13}%
\end{gather}
and taking also into account that
\begin{equation}
\star^{-1}A_{1}=-\star A_{1} \label{C14}%
\end{equation}
we can write
\begin{align}
(\mathbf{L\wedge L})\lrcorner(\mathbf{L\wedge L})  &  =(\mathbf{L\wedge
L})\lrcorner\star^{-1}\star(\mathbf{L\wedge L})\nonumber\\
-(\mathbf{L\wedge L})\lrcorner\star\star(\mathbf{L\wedge L})  &
=-\star(\mathbf{L\wedge L})\lrcorner\star(\mathbf{L\wedge L})\nonumber\\
&  =-((\mathbf{L\wedge L})\lrcorner\tau)\lrcorner((\mathbf{L\wedge
L})\lrcorner\tau)\nonumber\\
&  =-64\ell^{2}\mathbf{W\cdot W.} \label{c15}%
\end{align}
On the other hand we have that
\begin{equation}
(\star^{-1}\mathbf{W)(\star^{-1}\mathbf{W})}=(\star\mathbf{W)(\star
\mathbf{W})=W}\tau\tau\mathbf{W=-WW=-W\cdot W} \label{c15a}%
\end{equation}
and since $\star^{-1}\mathbf{W=L\wedge L}$ we have from Eqs. (\ref{c15}) and
(\ref{c15a}) that
\begin{equation}
(\mathbf{L\wedge L})(\mathbf{L\wedge L})=-64\ell^{2}\mathbf{W\cdot
W}=(\mathbf{L\wedge L})\lrcorner(\mathbf{L\wedge L}) \label{c15b}%
\end{equation}
which completes the proof.
\end{proof}

\begin{remark}
The second Casimir invariant of \textrm{spin}$_{4,1}^{e}$ is defined by
\begin{equation}
C_{2}=\mathbf{W\cdot W}=-\frac{1}{64\ell^{2}}(\mathbf{L\wedge L}%
)(\mathbf{L\wedge L})=-m^{2}s(s+1), \label{c16}%
\end{equation}
where $s=0,1/2,1,3/2,...$ It is thus quite obvious that contrary to the
classical case the operator $\mathbf{L\wedge L}$ cannot be null for otherwise
from \emph{Eq.(\ref{c16}) it would be necessary that }$m=0$ or $s=0.$
\end{remark}

Observe that\ the spin $1/2$ \ wave function needs to satisfy the fourth order
equation%
\begin{equation}
\left(  \frac{1}{64\ell^{2}}(\mathbf{L\wedge L})(\mathbf{L\wedge L}%
)-m^{2}s(s+1)\right)  \phi=0. \label{c23}%
\end{equation}
\ 

We can factorize$\ $the invariant $C_{2}$ as
\begin{equation}
\left(  \frac{1}{8\ell}\mathbf{L\wedge L}+m\sqrt{s(s+1)}\right)  \left(
\frac{1}{8\ell}\mathbf{L\wedge L}-m\sqrt{s(s+1)}\right)  =0. \label{c18}%
\end{equation}

Then, a possible second order equation that we will impose to be satisfied by
$\phi$ is\footnote{We used that $s=1/2$.}%
\begin{equation}
\left(  \mathbf{L\wedge L}-4\sqrt{3}m\ell\right)  \phi=0. \label{c18a}%
\end{equation}

To continue observe that we cannot factorize $\mathbf{L\lrcorner L}-\ell
^{2}m^{2}=0$ in two first order operators. However taking into account
Eq.(\ref{c9}) we can write
\begin{equation}
\frac{1}{\ell^{2}}\mathbf{L}^{2}-m^{2}-\frac{1}{\ell^{2}}\mathbf{L\wedge L}=0.
\label{c19}%
\end{equation}
Thus, taking into account Eq.(\ref{c18a}) we can write
\begin{equation}
\left(  \frac{1}{\ell^{2}}\mathbf{L}^{2}-m^{2}\right)  \phi=\frac{1}{\ell^{2}%
}(\mathbf{L\wedge L)}\phi=\frac{4\sqrt{3}m}{\ell}\phi, \label{c199}%
\end{equation}
or
\begin{equation}
\left(  \frac{1}{\ell^{2}}\mathbf{L}^{2}-m^{2}-\frac{4\sqrt{3}m}{\ell}\right)
\phi=0 \label{c19a}%
\end{equation}
and calling
\begin{equation}
\boldsymbol{\lambda}^{2}:=m^{2}+\frac{4\sqrt{3}m}{\ell} \label{c19aa}%
\end{equation}
we can now factor Eq.(\ref{c19a}) as
\begin{equation}
\left(  \frac{1}{\ell^{2}}\mathbf{L}^{2}-\boldsymbol{\lambda}^{2}\right)
\phi=\left(  \frac{1}{\ell}\mathbf{L+}\boldsymbol{\lambda}\right)  \left(
\frac{1}{\ell}\mathbf{L}-\boldsymbol{\lambda}\right)  \phi=0. \label{c21}%
\end{equation}

\subsection{The Tangency Condition and the DHESS1}

Let $\{\boldsymbol{\mathring{\theta}}^{0},\boldsymbol{\mathring{\theta}}%
^{1},\boldsymbol{\mathring{\theta}}^{2},\boldsymbol{\mathring{\theta}}%
^{3},\boldsymbol{\mathring{\theta}}^{4}\}$ be an orthonormal basis for $%
{\textstyle\bigwedge\nolimits^{1}}
T^{\ast}\mathring{M}$ such that $\{\boldsymbol{\theta}^{\mu}%
=\boldsymbol{\mathring{\theta}}^{\mu},~~~\mu=0,1,2,3\}$ is a tangent cotetrad
basis for de Sitter spacetime, i.e., $\boldsymbol{\theta}^{\mu}$ $\in\sec%
{\textstyle\bigwedge\nolimits^{1}}
T^{\ast}M\hookrightarrow\sec\mathcal{C\ell}(M,\boldsymbol{g})\subset
\sec\mathcal{C\ell}(\mathring{M},\boldsymbol{\mathring{g}})$ with
$\boldsymbol{\mathring{\theta}}^{4}$ orthogonal to $M$, i.e.,
$\boldsymbol{\mathring{\theta}}^{4}\lrcorner\tau_{\boldsymbol{g}}=0$.

We now propose taking into account Eq.(\ref{c21}) that the electron wave
function in de Sitter spacetime must satisfy the linear equation%

\begin{equation}
\left(  \frac{1}{\ell}\mathbf{L}-\boldsymbol{\lambda}\right)  \phi=0,
\label{c22}%
\end{equation}
with the constrain that $\phi$ is tangent to $M$, i.e., it does not contain in
its expansion terms containing $\boldsymbol{\mathring{\theta}}^{A}%
\boldsymbol{\mathring{\theta}}^{4}$.

Eq.(\ref{c22}) will be called the Dirac-Hestenes equation in de Sitter
spacetime (\textbf{DHESS1}).

\begin{remark}
\label{lambda}Take notice that in our formalism $\boldsymbol{\lambda}^{2}$
must be necessarily be a real number. Thus, if we had choose as wave equation
from the factorization of \emph{Eq.(\ref{c23})} the equation%
\begin{equation}
\left(  \mathbf{L\wedge L}+4\sqrt{3}m\ell\right)  \phi=0.
\end{equation}
we would get $\boldsymbol{\lambda}^{2}=m(m-\frac{4\sqrt{3}}{\ell})\geq0$ and
would arrive at the conclusion that the theory implies that all particles
living in de Sitter manifold would have masses satisfying $m\geq\frac
{4\sqrt{3}}{l}$!. This will be investigate in another publication.
\end{remark}

\begin{remark}
We make the important observation that if we did not use \emph{Eq.(\ref{c18a}%
)}, a constraint coming for the factorization of the second Casimir invariant
$C_{2}$, then factorization of \emph{Eq.(\ref{c19})} leads \emph{(}taking into
account that $\star^{-1}\mathbf{W=L\wedge L}$\emph{)} to the
integro-differential equation%
\begin{equation}
\left(  \frac{1}{\ell}\mathbf{L}-\sqrt{m^{2}+\star^{-1}\frac{1}{\ell^{2}%
}\mathbf{W}}\right)  \phi=0. \label{c240}%
\end{equation}
\ 
\end{remark}

\begin{remark}
Note that the matrix representation of \emph{Eq.(\ref{c21}) }in terms of the
representatives $\Phi\in\sec\mathbf{P}_{\mathrm{Spin}_{4,1}^{e}}(\mathring
{M},\mathtt{\mathring{g})}\times\mathbb{C}^{4}$ of $\phi$ is
\begin{equation}
\left(  \frac{1}{\ell}\underline{L}-\boldsymbol{\lambda}\right)
\Phi=0,~~~~\underline{L}:=\underline{\gamma}^{A}\underline{\gamma}%
^{B}\mathbf{L}_{AB} \label{c25}%
\end{equation}
In \emph{Eq.(\ref{c25})} $\underline{\gamma}^{A}$ is a matrix representation
of $\theta^{A}$, $A=1,2,3,4,0$. Note that since the operator $\underline{L}$
is not Hermitian in \emph{\cite{gursey}} the author left open the possibility
that $\boldsymbol{\lambda}$\ is a complex number. This is not the case in our
formalism, as we already observed in Remark \emph{\ref{lambda}}.
\end{remark}

\section{An Heuristic Derivation of the DHESS2}

We start recalling that a classical free particle in de Sitter manifold
structure $(M,\boldsymbol{g})$ certainly follows a timelike worldline
$\sigma:\mathbb{R}\supset I\rightarrow M$. To unveil the nature of that motion
we suppose the existence of a 2-form field $\boldsymbol{L}\in\sec%
{\textstyle\bigwedge\nolimits^{2}}
T^{\ast}\mathring{M}\hookrightarrow\sec\mathcal{C\ell}((\mathring
{M},\boldsymbol{\mathring{g}})$ such that its restriction over $\sigma$ is
$\boldsymbol{l}$ i.e., $\boldsymbol{L}_{\mid\sigma}=\boldsymbol{l}$. given by
Eq.(\ref{c1}).

It is a very reasonable hypothesis that the classical motion of a free
particle in de Sitter manifold structure $(M,\boldsymbol{g})$ will happen only
under the condition that $\boldsymbol{L}\in\sec%
{\textstyle\bigwedge\nolimits^{2}}
T^{\ast}\mathring{M}\hookrightarrow\sec\mathcal{C\ell}((\mathring
{M},\boldsymbol{\mathring{g}})$ is a constant $2$-form $\boldsymbol{B}$\ as
registered by an hypothetical \textquotedblleft observer\textquotedblright%
\ living in the bulk spacetime structure $\mathbb{R}^{4,1}$. This is an
obvious generalization of the fact that the free motion of a classical
particle in the Minkowski spacetime structure (see Appendix D) happens with
constant momentum. Let $\{\boldsymbol{\mathring{\theta}}^{A},~A=0,1,2,3,4\}$
be an orthonormal basis for $%
{\textstyle\bigwedge\nolimits^{1}}
T^{\ast}\mathring{M}$ such that $\{\boldsymbol{\theta}^{\mu}%
=\boldsymbol{\mathring{\theta}}^{\mu},~\mu=0,1,2,3\}$ is a cotetrad basis for
$%
{\textstyle\bigwedge\nolimits^{1}}
T^{\ast}M\hookrightarrow\mathcal{C\ell}(M,\mathtt{g})\subset\mathcal{C\ell
}(\mathring{M},\mathtt{\mathring{g}})$ with $\boldsymbol{\mathring{\theta}%
}^{4}$ orthogonal to $M$, i.e., $\boldsymbol{\mathring{\theta}}^{4}%
\lrcorner\tau_{\boldsymbol{g}}=0$. Then, the condition that $\boldsymbol{L}$
is a constant $2$-form may be written
\begin{equation}
\frac{1}{\ell}\boldsymbol{L}=-\kappa\boldsymbol{B}. \label{C32}%
\end{equation}

Now, let $\phi^{\prime}$ be an invertible \textbf{DHSF }living in
$(\mathring{M},\boldsymbol{\mathring{g})}$\textbf{ }such that
\begin{equation}
\boldsymbol{B}=\phi^{\prime-1}\boldsymbol{\mathring{\theta}}^{1}%
\boldsymbol{\mathring{\theta}}^{2}\phi^{\prime}=\boldsymbol{\mathring{E}}%
^{1}\boldsymbol{\mathring{E}}^{2}. \label{c33}%
\end{equation}

Before continuing we also suppose that
\begin{equation}
\phi^{\prime-1}\boldsymbol{\mathring{\theta}}^{A}\boldsymbol{\mathring{\theta
}}^{B}\phi^{\prime}=\boldsymbol{\mathring{E}}^{A}\boldsymbol{\mathring{E}}%
^{B},~~~A,B=0,1,2,3,4. \label{c33a}%
\end{equation}

Using Eq.(\ref{c33}) in Eq.(\ref{C32}) we can write a purely classic
\textbf{DHESS} equation, namely,%
\begin{equation}
\frac{1}{\ell}\boldsymbol{L}\phi^{\prime-1}\boldsymbol{\mathring{\theta}}%
^{2}\boldsymbol{\mathring{\theta}}^{1}=-\kappa\phi^{\prime-1}. \label{C34}%
\end{equation}
This is compatible with a quantum \textbf{DHESS1}, i.e.,
\begin{equation}
\frac{1}{\ell}\mathbf{L}\phi^{\prime-1}+\kappa\phi^{\prime-1}=0 \label{c13}%
\end{equation}
with the \emph{postulate} that when we restrict our considerations to
\textbf{DHSFs} living in the de Sitter structure $(M,\boldsymbol{g})$ it is:%

\begin{equation}
\frac{1}{\ell}\boldsymbol{L}\phi^{\prime-1}\boldsymbol{\mathring{\theta}}%
^{2}\boldsymbol{\mathring{\theta}}^{1}\mapsto\frac{1}{\ell}\mathbf{L}%
\phi^{\prime-1}:=\boldsymbol{\mathring{\theta}}^{A}\boldsymbol{\mathring
{\theta}}^{B}\left(  X_{A}\mathbf{P}_{B}-X_{B}\mathbf{P}_{A}\right)
\phi^{\prime-1}\label{c35}%
\end{equation}
with
\begin{equation}
\boldsymbol{P}_{A}\phi^{\prime-1}\mapsto\mathbf{P}_{A}\phi^{\prime-1}%
:=\frac{\partial}{\partial X^{A}}\phi^{\prime-1}\boldsymbol{\mathring{\theta}%
}^{2}\boldsymbol{\mathring{\theta}}^{1}.\label{c36}%
\end{equation}
Note that under the above conditions the \textbf{DHESS1} equation
(Eq.(\ref{c22}))\ will be identical to the \textbf{DHESS2 }(Eq.(\ref{c13}))
if
\begin{equation}
\lambda\mapsto\kappa,~~~~\phi^{\prime-1}\mapsto\phi.\label{C366}%
\end{equation}
Indeed, recall that a generalization of a \textbf{DHSF} (see Eq.(\ref{ddhsf})
in Appendix B ) for the $(\mathring{M}\simeq\mathbb{R}^{5}%
,\boldsymbol{\mathring{g}})$ structure (see page 106 of\cite{hs1984}) can be
written as
\begin{equation}
\phi^{\prime}=\rho^{1/2}U,~~~\phi^{\prime}=\rho^{-1/2}U^{-1}%
\end{equation}
where $\rho$ is a scalar function $U\in\sec\mathcal{C\ell}^{0}\mathcal{(}%
\mathring{M},\boldsymbol{\mathring{g})}$ and for any $x\in\mathring{M}$,
$U(x)\tilde{U}(x)=1$. Note that $U$ can be written as%
\begin{equation}
U(x)=e^{z(x)F(x)}=(\cosh z(x)+F(x)\sinh z(x))
\end{equation}
with the $2$-form $F$ satisfying $F^{2}=1$. Under these conditions a standard
calculation shows that%

\begin{equation}
\frac{\partial\phi^{\prime^{-1}}}{\partial X^{M}}\phi^{\prime}=-\phi
^{\prime^{-1}}\frac{\partial\phi^{\prime}}{\partial X^{M}}=\phi^{\prime}%
\frac{\partial\phi^{\prime-1}}{\partial X^{M}},~~M=0,1,2,3,4. \label{cx}%
\end{equation}
Thus, starting with Eq.(\ref{c22}) and writing $\boldsymbol{\mathring{\theta}%
}^{A}\boldsymbol{\mathring{\theta}}^{B}=\phi^{\prime}\boldsymbol{\mathring{E}%
}^{A}\boldsymbol{\mathring{E}}^{B}\phi^{\prime-1}$we get%

\begin{align}
0  &  =\boldsymbol{\mathring{\theta}}^{A}\boldsymbol{\mathring{\theta}}%
^{B}\left(  X_{A}\boldsymbol{P}_{B}-X_{B}\boldsymbol{P}_{A}\right)
\phi^{\prime-1}+\kappa\phi^{\prime-1}\nonumber\\
&  =-\phi^{\prime}\boldsymbol{\mathring{E}}^{A}\boldsymbol{\mathring{E}}%
^{B}\phi^{\prime-1}\left[  \left(  X_{A}\frac{\partial}{\partial X^{B}}%
-X_{B}\frac{\partial}{\partial X^{A}}\right)  \phi^{\prime-1}\right]
\phi^{\prime}\boldsymbol{\mathring{E}}^{2}\boldsymbol{\mathring{E}}^{1}%
\phi^{\prime-1}+\kappa\phi^{\prime-1}\nonumber\\
&  =-\phi^{\prime}\boldsymbol{\mathring{E}}^{A}\boldsymbol{\mathring{E}}%
^{B}\phi^{\prime-1}\left[  \phi^{\prime}\left(  X_{A}\frac{\partial}{\partial
X^{B}}-X_{B}\frac{\partial}{\partial X^{A}}\right)  \phi^{\prime-1}\right]
\boldsymbol{\mathring{E}}^{2}\boldsymbol{\mathring{E}}^{1}\phi^{\prime
-1}+\kappa\phi^{\prime-1}\nonumber\\
&  =-\phi^{\prime}\boldsymbol{\mathring{E}}^{A}\boldsymbol{\mathring{E}}%
^{B}\left[  \left(  X_{A}\frac{\partial}{\partial X^{B}}-X_{B}\frac{\partial
}{\partial X^{A}}\right)  \phi^{\prime-1}\right]  \boldsymbol{\mathring{E}%
}^{2}\boldsymbol{\mathring{E}}^{1}\phi^{\prime^{-1}}+\kappa\phi^{\prime-1}
\label{C36A}%
\end{align}
and multiplying Eq.(\ref{C36A}) on the left by $\phi^{\prime-1}$ and on the
right by $\phi$ we get%
\begin{equation}
0=\boldsymbol{\mathring{E}}^{A}\boldsymbol{\mathring{E}}^{B}\left[  \left(
X_{A}\frac{\partial}{\partial X^{B}}-X_{B}\frac{\partial}{\partial X^{A}%
}\right)  \phi^{\prime-1}\right]  \boldsymbol{\mathring{E}}^{2}%
\boldsymbol{\mathring{E}}^{1}-\kappa\phi^{\prime-1}. \label{c36b}%
\end{equation}
which taking into account Eq.(\ref{C366}) is identical to Eq.(\ref{c22}).

\section{The limit $\ell\rightarrow\infty$ of Eq.(\ref{c13})}

Expressing $\mathbf{L}_{AB}$ in terms of the projective coordinates we get%
\begin{equation}
\mathbf{L}_{\alpha4}\phi=l\partial_{\alpha}\phi\boldsymbol{\mathring{\theta}%
}^{2}\boldsymbol{\mathring{\theta}}^{1}-\frac{1}{4l}(2\eta_{\alpha\lambda
}x^{\lambda}x^{\nu}-\sigma^{2}\delta_{\alpha}^{\nu})\partial_{\nu}%
\phi\boldsymbol{\mathring{\theta}}^{2}\boldsymbol{\mathring{\theta}}%
^{1}.\label{c38}%
\end{equation}
Also\footnote{In written Eq.(\ref{c39}) we take into account that the metric
of the de Sitter manifold has signature $(1,-1,-1,-1)$ and the metric of the
bulk manifold has signature $(1,1,1,1,-1)$.},
\begin{equation}
\mathbf{L}_{\mu\nu}\phi=-\eta_{\mu\lambda}x^{\lambda}\partial_{\nu}%
\phi\boldsymbol{\mathring{\theta}}^{2}\boldsymbol{\mathring{\theta}}^{1}%
+\eta_{\nu\lambda}x^{\lambda}\partial_{\mu}\phi\boldsymbol{\mathring{\theta}%
}^{2}\boldsymbol{\mathring{\theta}}^{1}\label{c39}%
\end{equation}
Taking into account the results of Appendix B (Eq.(\ref{419}) and Remark
\ref{XX} ) we know that $\phi$ must be of the form
\begin{equation}
\phi=\left(  \varphi+\boldsymbol{\mathring{\theta}}_{4}\boldsymbol{\mathring
{\theta}}_{\alpha}\xi^{\alpha}\right)  \in\sec\mathcal{C\ell}^{0}%
\mathcal{(}\mathring{M},\boldsymbol{\mathring{g}}\mathcal{)}\label{c39aa}%
\end{equation}
with $\varphi$ living in a particular ideal of $\mathcal{C\ell}^{0}%
\mathcal{(}\mathring{M},\boldsymbol{\mathring{g}}\mathcal{)}$. Moreover,
taking into account that in Eq.(\ref{419}) the $\mathbf{\xi}^{\alpha}$,
$\alpha=0,1,2,3$ must be linked in order to define only three degrees of
freedom we take $\eta_{\alpha\beta}\xi^{\alpha}\tilde{\xi}^{\beta}=1$.
Recalling that $\eta_{\alpha\beta}X^{\alpha}X^{\beta}=\ell^{2}(1-(X^{4}%
)^{2}/\ell^{2})$ we have $\eta_{\alpha\beta}x^{\alpha}x^{\beta}=-\frac
{\ell^{2}}{\Omega^{2}}(1-\frac{(X^{4})^{2}}{\ell^{2}})$ and a possible
solution for our constraint is%
\begin{equation}
\xi^{\alpha}=\frac{1}{\ell}\Omega^{2}x^{\alpha}\lambda\label{c39x}%
\end{equation}
where $\lambda$ $\in\mathcal{C\ell}^{0}\mathcal{(}\mathring{M}%
,\boldsymbol{\mathring{g}}\mathcal{)}$ is such that $\lambda\tilde{\lambda
}=-1/(1-(X^{4})^{2}/\ell^{2})$. Thus $\phi$ can be written as
\begin{equation}
\phi=\varphi+\frac{1}{\ell}\Omega^{2}\boldsymbol{\mathring{\theta}}%
_{4}\boldsymbol{\mathring{\theta}}_{\alpha}x^{\alpha}\lambda\label{c39a}%
\end{equation}
and Eq.(\ref{c35}) can be written (in the limit $\ell\rightarrow\infty$) as
\begin{equation}
\boldsymbol{\mathring{\theta}}^{\alpha}\boldsymbol{\mathring{\theta}}^{4}%
\frac{\partial}{\partial x^{\alpha}}\varphi\boldsymbol{\mathring{\theta}}%
^{2}\boldsymbol{\mathring{\theta}}^{1}-\boldsymbol{\lambda}\varphi
=0.\label{c400}%
\end{equation}

Now, calling $\boldsymbol{\mathring{\theta}}^{\alpha}\boldsymbol{\mathring
{\theta}}^{4}=\Gamma^{\alpha}$, $\alpha=0,1,2,3$ we know (see AppendixD) that
\begin{equation}
\Gamma^{\alpha}\Gamma^{\beta}+\Gamma^{\beta}\Gamma^{\alpha}=2\eta^{\alpha
\beta}.\label{c41}%
\end{equation}

Finally, recalling Eq.(\ref{c39a}) we can write Eq.(\ref{c400}) in the limit
$\ell\rightarrow\infty$ as
\begin{equation}
\Gamma^{\alpha}\frac{\partial}{\partial x^{\alpha}}\varphi\Gamma^{2}\Gamma
^{1}-m\varphi=0.\label{C43}%
\end{equation}
which is clearly a \emph{representative} of the \textbf{DHE} in Minkowski
spacetime in the $\mathcal{C\ell(}M\simeq\mathbb{R}^{4},\boldsymbol{\eta
}\mathcal{)}$ bundle which reads\footnote{In Eq.(\ref{c44}) \ $\{x^{\mu}\}$
are coordinates in Einstein-Lorentz-Poincar\'{e} gauge, and $\gamma^{\mu
}:=dx^{\mu}$. More details, in Appendix C.}%
\begin{equation}
\gamma^{\alpha}\frac{\partial}{\partial x^{\alpha}}\psi\gamma^{2}\gamma
^{1}-m\psi\gamma^{0}=0.\label{c44}%
\end{equation}
Indeed, multiplying Eq.(\ref{c44}) on the right by the idempotent $\frac{1}%
{2}(1+\gamma^{0})$ it reads (calling $\zeta=\psi\frac{1}{2}(1+\gamma^{0})$)%
\begin{equation}
\gamma^{\alpha}\frac{\partial}{\partial x^{\alpha}}\zeta\gamma^{2}\gamma
^{1}-m\zeta=0.\label{c45}%
\end{equation}

So, Eqs. (\ref{C43}) and (\ref{c44}) can be identified with the
identifications
\begin{equation}
\Gamma^{\alpha}\leftrightarrow\gamma^{\alpha},~~~~~\varphi\leftrightarrow
\zeta\label{c45a}%
\end{equation}

\begin{remark}
It is well known that when $\ell\rightarrow\infty$ the $\mathbf{DHESS1}%
$\textbf{ }\emph{(Eq.(\ref{c22})) \ which is a Clifford bundle representation
of the Dirac equation (written with the standard matrix formalism) is also
equivalent to the }$\mathbf{DHE}$ in Minkowski spacetime \cite{gursey}.
\end{remark}

\section{Conclusions}

We gave a Clifford bundle motivated approach to the wave equation of a free
spin $1/2$ fermion in the de Sitter manifold, a brane with topology
$M=\mathrm{S0}(4,1)/\mathrm{S0}(3,1)$ living in the bulk spacetime
$\mathbb{R}^{4,1}=(\mathring{M}=\mathbb{R}^{5},\boldsymbol{\mathring{g}})$ and
equipped with a metric field $\boldsymbol{g:}=-i^{\ast}\mathring{g}$ with
$\boldsymbol{i}:M\rightarrow\mathring{M}$ being the inclusion map. To obtain
the analog of Dirac equation in Minkowski spacetime we appropriately factorize
the two Casimir invariants $C_{1}$ and $C_{2}$ of the Lie algebra of the de
Sitter group using the constraint given in the linearization of $C_{2}$ as
input to linearize $C_{1}$. In this way we obtain an equation that we called
\textbf{DHESS1 }(which is simply postulated in previous studies
(\cite{dirac,gursey}). Next we derive a wave equation (called \textbf{DHESS2})
for a free spin $1/2$ fermion in the de Sitter manifold using an heuristic
argument which is an obvious generalization of an heuristic argument
(described in detail in one of the appendices) permitting a derivation of the
Dirac equation in Minkowski spacetime which shows that such famous equation
express nothing more that the momentum of a free particle is a constant vector
field over timelike \ integral curves of a given velocity field. \ It is a
nice fact that \textbf{DHESS1} and \textbf{DHESS2 }coincide. We emphasize
moreover that our approach leaves clear the \emph{nature and meaning of the
Casimir invariants} \cite{cota} and thus of the object $\boldsymbol{\lambda}$
(Eq.(\ref{c19aa})), something that is not clear in other papers on the subject
such as, e.g., \cite{dirac,gursey,NC1999,riordan} which use the standard
covariant Dirac spinor fields.

As a last comment here we recall that if the de Sitter manifold is supposed to
be a spacetime, i.e., a structure\cite{rc2007} $(M,\boldsymbol{g}$%
,$\tau_{\boldsymbol{g}},\nabla,\uparrow)$ where $\nabla$ is an arbitrary
connection compatible with $\boldsymbol{g}$ then the writing of Dirac equation
in such a structure is supposed to \ be given by very different arguments from
the ones used in this paper. A comparison of the two approaches will be
presented elsewhere.

\begin{acknowledgement}
M. Rivera-Tapia and E. A. Notte-Cuello were supported by the Direccion de
Investigaci\'{o}n de la Universidad de La Serena DIULS. The work of I.
Kondrashuk was supported by Fondecyt (Chile) Grants Nos. 1040368, 1050512, and
1121030, and by DIUBB (Chile) Grants Nos. 102609 and 121909 GI/C-UBB. S. A.
Wainer is grateful to CAPES for a Ph.D. fellowship.
\end{acknowledgement}

\appendix

\section{\textrm{SO}$^{e}(4,1)$ and \textrm{Spin}$_{4,1}^{e}$ and their Lie
Algebras}

The group\textrm{ O}$(4,1)$ may be defined as the group of (invertible)
$5\times5$ real matrices $\mathbf{\Lambda}$ such that if $\mathbf{X}%
^{t}=(X^{1},X^{2},X^{3},X^{4},X^{0})$ denotes a $1\times5$ real matrix and if
\begin{equation}
\boldsymbol{\mathring{G}=}\left(
\begin{array}
[c]{ccccc}%
1 & 0 & 0 & 0 & 0\\
0 & 1 & 0 & 0 & 0\\
0 & 0 & 1 & 0 & 0\\
0 & 0 & 0 & 1 & 0\\
0 & 0 & 0 & 0 & -1
\end{array}
\right)  \label{A1}%
\end{equation}
then%
\begin{equation}
\mathbf{\Lambda}^{t}\boldsymbol{\mathring{G}}\mathbf{\Lambda}%
=\boldsymbol{\mathring{G}.} \label{a2}%
\end{equation}

Of, course, $\det\mathbf{\Lambda}^{2}=1$. Let \textrm{O}$^{\uparrow}(4,1)$ be
the subgroup of \textrm{O}$(4,1)$ such that if $\mathbf{X}^{t}=(X^{1}%
,X^{2},X^{3},X^{4},X^{0})$ and $X^{0}>0$ then $(\mathbf{\Lambda X)}%
^{t}=(X^{\prime1},X^{\prime2},X^{\prime3},X^{^{\prime}4},X^{\prime0})$ has
$X^{\prime0}>0.$

The matrices with $\det\Lambda=1$ closes the subgroup \textrm{SO}$_{+}(4,1)$
of \textrm{O}$(4,1)$. The elements of \textrm{SO}$_{+}(4,1)$ are clearly
connected to the identity element of \textrm{O}$(4,1)$. We shall denoted by
\textrm{SO}$^{e}(4,1)=$ \textrm{O}$^{\uparrow}(4,1)\cap$\textrm{SO}$_{+}%
(4,1)$. We denote by \textrm{Spin}$_{4,1}^{e}$the simple connected group that
is the double cover of \textrm{SO}$^{e}(4,1)$. It is a well known result
\cite{porteous} that the elements of \textrm{Spin}$_{4,1}^{e}$ are the
invertible elements of $u\in\mathbb{R}_{4,1}^{0}$ such that $u\tilde{u}=1$.

Now, it is a well known result that the elements of \textrm{SO}$^{e}(4,1)$ can
be written as an exponential of a sum of antisymmetric matrices $\mathbf{M}%
_{AB}$, i.e.,
\begin{equation}
\mathbf{\Lambda=}\exp\left(  \frac{1}{2}\chi^{AB}\mathbf{M}_{AB}\right)
,~~~~~\chi^{AB}=-\chi^{BA},~~~~\chi^{AB}\in\mathbb{R}\text{.} \label{a3}%
\end{equation}
The matrices $\mathbf{M}_{AB}$ closes the Lie algebra \textrm{so}$^{e}(4,1)$
of \textrm{SO}$_{+}(4,1)$ and satisfy the commutation relations%
\begin{equation}
\lbrack\mathbf{M}_{AB},\mathbf{M}_{CD}]=\eta_{AC}\mathbf{M}_{BD}+\eta
_{BD}\mathbf{M}_{AC}-\eta_{BC}\mathbf{M}_{AD}-\eta_{AD}\mathbf{M}_{BC}.
\label{a4}%
\end{equation}

When \textrm{SO}$_{+}(4,1)$ acts as a transformation group in the manifold
$\mathring{M}$ the generators $\mathbf{M}_{AB}$ of the Lie algebra
\textrm{so}$^{e}(4,1)$ are represented by the vector fields
\begin{equation}
\mathbf{\xi}_{AB}=\eta_{AC}X^{C}\frac{\partial}{\partial X^{B}}-\eta_{BC}%
X^{C}\frac{\partial}{\partial X^{A}} \label{a5}%
\end{equation}
which are Killing vector fields of $\boldsymbol{\mathring{g}}$, i.e.,
$\pounds _{\mathcal{\xi}_{AB}}\boldsymbol{\mathring{g}}=0$, with $\pounds $
denoting the Lie derivative. Of course, if $f:\mathring{M}\rightarrow
\mathbb{R}$ we immediately verify that%
\begin{equation}
\lbrack\mathbf{\xi}_{AB},\mathbf{\xi}_{CD}]f=(\eta_{AC}\mathbf{\xi}_{BD}%
+\eta_{BD}\mathbf{\xi}_{AC}-\eta_{BC}\mathbf{\xi}_{AD}-\eta_{AD}\mathbf{\xi
}_{BC})f. \label{a6}%
\end{equation}

On the other hand the elements of \textrm{Spin}$_{4,1}^{e}$ are of the form
$\pm\exp\mathbf{B}$ where $\mathbf{B}$ is a biform\footnote{See page 223 of
\cite{lounesto}.}. We write $u\in$\textrm{Spin}$_{4,1}^{e}$ as%
\begin{equation}
u=\exp\left(  \frac{1}{4}\chi_{AB}\boldsymbol{\mathring{E}}^{AB}\right)
=\exp\left(  \frac{1}{4}\chi_{AB}\boldsymbol{\mathring{E}}^{A}%
\boldsymbol{\mathring{E}}^{B}\right)  \label{A7}%
\end{equation}
and we may immediately verify that the biforms $\mathbf{S}^{AB}=\frac{1}%
{2}\boldsymbol{\mathring{E}}^{AB}$ satisfy the Lie algebra \textrm{spin}%
$_{4,1}^{e}$which is isomorphic to the Lie algebra \textrm{so}$^{e}(4,1)$, i.e.%

\begin{equation}
\lbrack\mathbf{S}_{AB},\mathbf{S}_{CD}]=\eta_{AC}\mathbf{S}_{BD}+\eta
_{BD}\mathbf{S}_{AC}-\eta_{BC}\mathbf{S}_{AD}-\eta_{AD}\mathbf{S}_{BC}.
\label{a8}%
\end{equation}

\section{DHSF}

In this Appendix for convenience for the reader we recall the definition of
the concept of \textbf{DHSF} for a\ generic $4$-dimensional spin manifold $M$
equipped with a metric $\boldsymbol{g}$. The presentation improves a lit bit
the theory as developed originally in \cite{r2004,mr02004,lrw}.\footnote{Also
in \cite{lrw} it is presented a geometrical inspired theory to the Lie
derivative of \textbf{DHSFs}.}

In what follows $\mathbf{P}_{\mathrm{SO}_{1,3}^{e}}(M,\boldsymbol{g})$
($P_{\mathrm{SO}_{1,3}^{e}}(M,\mathtt{g})$) denotes the principal bundle of
oriented Lorentz tetrads (cotetrads).

\begin{definition}
A spin structure for a general $m$-dimensional manifold $M$ consists of a
principal fiber bundle $\pi_{s}:P_{\mathrm{Spin}_{p,q}^{e}}(M,\mathtt{g}%
)\rightarrow M$, \emph{(}called the Spin Frame Bundle\emph{)} with group
$\mathrm{Spin}_{p,q}^{e}$ and a map%
\begin{equation}
\Lambda:P_{\mathrm{Spin}_{p,q}^{e}}(M,\mathtt{g})\rightarrow P_{\mathrm{SO}%
_{p,q}^{e}}(M,\mathtt{g}),
\end{equation}
satisfying the following conditions:

\begin{description}
\item[(i)] $\pi(\Lambda(p))=\pi_{s}(p),\forall p\in P_{\mathrm{Spin}_{p,q}%
^{e}}(M,\mathtt{g})$, where $\pi$ is the projection map of the bundle
$\pi:P_{\mathrm{SO}_{p,q}^{e}}(M,\mathtt{g})\rightarrow M$.

\item[(ii)] $\Lambda(pu)=\Lambda(p)\mathrm{Ad}_{u},\forall p\in
P_{\mathrm{Spin}_{p,q}^{e}}(M,\mathtt{g})$ and $\mathrm{Ad}:\mathrm{Spin}%
_{p,q}^{e}\rightarrow\mathrm{SO}_{p,q}^{e}, \break\mathrm{Ad}_{u}%
(a)=uau^{-1}.$
\end{description}
\end{definition}

\begin{definition}
Any section of $P_{\mathrm{Spin}_{p,q}^{e}}(M,\mathtt{g})$ is called a spin
frame field \emph{(}or simply a spin frame\emph{)}. We shall use the symbol
$\Xi\in\sec P_{\mathrm{Spin}_{p,q}^{e}}(M,\mathtt{g})$ to denoted a spin frame.
\end{definition}

We know that\footnote{Where $Ad:\mathrm{Spin}_{1,3}^{e}\rightarrow
\mathrm{End(}\mathbb{R}_{1,3}\mathrm{)}$\textrm{ } is such that
$Ad(u)a=uau^{-1}$. And $\rho:\mathrm{SO}_{1,3}^{e}\rightarrow\mathrm{End(}%
\mathbb{R}_{1,3}\mathrm{)}$ is the natural action of $\mathrm{SO}_{1,3}^{e}$
on $\mathbb{R}_{1,3}$.} \cite{rc2007}:%
\begin{equation}
\mathcal{C}\ell(M,\mathtt{g})=P_{\mathrm{SO}_{1,3}^{e}}(M,\mathtt{g}%
)\times_{\rho}\mathbb{R}_{1,3}=P_{\mathrm{Spin}_{1,3}^{e}}(M,\mathtt{g}%
)\times_{Ad}\mathbb{R}_{1,3}, \label{eq_cliff}%
\end{equation}
and since\footnote{Given the objets $A$ and $B$, $A$ $\hookrightarrow$ $B$
means as usual that $A$ is embedded in $B$ and moreover, $A\subseteq B$. In
particular, recall that there is a canonical vector space isomorphism between
$\bigwedge\mathbb{R}^{1,3}$ and $\mathbb{R}_{1,3}$, which is written
$\bigwedge\mathbb{R}^{1,3}\hookrightarrow\mathbb{R}_{1,3}$. Details in
\cite{cru,lawmi}.} $\bigwedge TM\hookrightarrow\mathcal{C}\ell(M,\mathtt{g})$,
sections of $\mathcal{C}\ell(M,\mathtt{g})$ (the Clifford fields) can be
represented as a sum of non homogeneous differential forms.

Next, using that $M\simeq S^{3}\times\mathbb{R\subset}\mathring{M}$ is
parallelizable\footnote{Follows by the fact that $S^{3}$ is a Lie group}, we
introduce the global tetrad basis $\boldsymbol{e}_{\alpha},\alpha=0,1,2,3$ on
$TM$ and in $T^{\ast}M$ the cotetrad basis on $\{\boldsymbol{\gamma}^{\alpha
}\}$, which are dual basis. We introduce the reciprocal basis
$\{\boldsymbol{e}^{\alpha}\}$ and $\{\boldsymbol{\gamma}_{\alpha}\}$ of
$\{\boldsymbol{e}_{\alpha}\}$ and $\{\boldsymbol{\gamma}^{\alpha}\}$
satisfying
\begin{equation}
\boldsymbol{g}(\boldsymbol{e}_{\alpha},\boldsymbol{e}^{\beta})=\delta_{\alpha
}^{\beta},~~~\mathtt{g}(\boldsymbol{\gamma}^{\beta},\boldsymbol{\gamma
}_{\alpha})=\delta_{\alpha}^{\beta}. \label{p2}%
\end{equation}

Moreover, recall that\footnote{Where the matrix with entries $\eta
_{\alpha\beta}$ (or $\eta^{\alpha\beta}$) is the diagonal matrix
$(1,-1,-1,-1)$.}%
\begin{equation}
\boldsymbol{g}=\eta_{\alpha\beta}\boldsymbol{\gamma}^{\alpha}\otimes
\boldsymbol{\gamma}^{\beta}=\eta^{\alpha\beta}\boldsymbol{\gamma}_{\alpha
}\otimes\boldsymbol{\gamma}_{\beta},~~~\mathtt{g}=\eta^{\alpha\beta
}\boldsymbol{e}_{\alpha}\otimes\boldsymbol{e}_{\beta}=\eta_{\alpha\beta
}\boldsymbol{e}^{\alpha}\otimes\boldsymbol{e}^{\beta}. \label{P3}%
\end{equation}

In this work we have that exists a spin structure on\ the 4-dimensional
Lorentzian manifold $(M,\boldsymbol{g})$, since $M$ is parallelizable, i.e.,
$P_{\mathrm{SO}_{1,3}^{e}}(M,\mathtt{g})$ is trivial, because of the following
result due to Geroch \cite{geroch}:

\begin{theorem}
\label{gerochh}For a $4-$dimensional Lorentzian manifold$\ (M,\boldsymbol{g}%
)$, a spin structure exists if and only if $P_{\mathrm{SO}_{1,3}^{e}%
}(M,\mathtt{g})$ is a trivial bundle.
\end{theorem}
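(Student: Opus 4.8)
The plan is to prove the two implications separately, recalling Geroch's original argument \cite{geroch}. The forward implication ``$P_{\mathrm{SO}_{1,3}^{e}}(M,\mathtt{g})$ trivial $\Rightarrow$ a spin structure exists'' is elementary and uses none of the special hypotheses: if the bundle is trivial it admits a global section, i.e.\ a global orthonormal cotetrad $\{\boldsymbol{\gamma}^{\alpha}\}$, which gives an isomorphism $P_{\mathrm{SO}_{1,3}^{e}}(M,\mathtt{g})\cong M\times\mathrm{SO}_{1,3}^{e}$. One then sets $P_{\mathrm{Spin}_{1,3}^{e}}(M,\mathtt{g}):=M\times\mathrm{Spin}_{1,3}^{e}$ and defines $\Lambda(x,u):=(x,\mathrm{Ad}_{u})$, where $\mathrm{Ad}:\mathrm{Spin}_{1,3}^{e}\rightarrow\mathrm{SO}_{1,3}^{e}$ is the covering homomorphism; the conditions (i) and (ii) of the definition of a spin structure follow by inspection. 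Note this half needs neither $\dim M=4$ nor any compactness hypothesis.

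For the converse I would first use the Lorentzian metric to reduce the structure group. Since $M$ carries a Lorentzian metric it is time-orientable (automatic in the case at hand, where $M\simeq\mathbb{R}\times S^{3}$), so there is a global unit timelike vector field $V$ and a $\boldsymbol{g}$-orthogonal splitting $TM=L\oplus E$ with $L=\mathrm{span}(V)$ a trivial real line bundle and $E=V^{\perp}$ an oriented Riemannian rank-$3$ bundle. Correspondingly $P_{\mathrm{SO}_{1,3}^{e}}(M,\mathtt{g})$ reduces to the oriented orthonormal frame bundle $Q:=P_{\mathrm{SO}(3)}(E)$ along the inclusion $\mathrm{SO}(3)\hookrightarrow\mathrm{SO}_{1,3}^{e}$. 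By the Cartan decomposition $\mathrm{SO}_{1,3}^{e}\cong\mathrm{SO}(3)\times\mathbb{R}^{3}$ and $\mathrm{Spin}_{1,3}^{e}\cong\mathrm{Spin}(3)\times\mathbb{R}^{3}$, with the two-fold cover restricting to $\mathrm{Spin}(3)\rightarrow\mathrm{SO}(3)$; hence $P_{\mathrm{SO}_{1,3}^{e}}(M,\mathtt{g})$ is trivial if and only if $Q$ is, and a spin structure on $(M,\boldsymbol{g})$ exists if and only if $E$ admits a spin structure, i.e.\ if and only if $w_{2}(E)=0$ in $H^{2}(M;\mathbb{Z}_{2})$.

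Finally I would invoke obstruction theory. Trivializing $E$ amounts to finding a global section of $Q$, a bundle with fibre $\mathrm{SO}(3)$; a section defined over the $k$-skeleton of a CW decomposition of $M$ extends over the $(k+1)$-skeleton with obstruction in $H^{k+1}(M;\pi_{k}(\mathrm{SO}(3)))$. Since $\pi_{0}(\mathrm{SO}(3))=\pi_{2}(\mathrm{SO}(3))=0$, $\pi_{1}(\mathrm{SO}(3))=\mathbb{Z}_{2}$, $\pi_{3}(\mathrm{SO}(3))=\mathbb{Z}$, and $\dim M=4$, the only potentially nonzero obstructions lie in $H^{2}(M;\mathbb{Z}_{2})$ --- which is the primary obstruction $w_{2}(E)$, vanishing by hypothesis --- and in $H^{4}(M;\mathbb{Z})$, which vanishes because a connected non-compact $4$-manifold has $H^{4}(M;\mathbb{Z})=0$. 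Therefore $E$ is trivial, hence so are $Q$ and $P_{\mathrm{SO}_{1,3}^{e}}(M,\mathtt{g})$, completing the proof.

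The main obstacle is the converse: it hinges on (a) the reduction to the maximal compact subgroup $\mathrm{SO}(3)$ and (b) the vanishing of $H^{4}(M;\mathbb{Z})$, which fails for a closed spacetime, so the (usually tacit) non-compactness assumption must be made explicit; one must also remember that the splitting $TM=L\oplus E$ presupposes time-orientability. For the use made of the theorem in the main text none of this is actually needed: $M\simeq\mathbb{R}\times S^{3}$ is parallelizable, so $TM$ and a fortiori $P_{\mathrm{SO}_{1,3}^{e}}(M,\mathtt{g})$ are trivial, and the easy direction already yields the spin structure employed throughout.
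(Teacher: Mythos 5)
Your proposal is correct, but be aware that the paper contains no proof of this statement at all: Theorem \ref{gerochh} is imported verbatim as ``a result due to Geroch \cite{geroch}'' and is used only through its easy direction, since $M\simeq\mathbb{R}\times S^{3}$ is parallelizable and hence $P_{\mathrm{SO}_{1,3}^{e}}(M,\mathtt{g})$ is manifestly trivial. Your reconstruction is the standard modern argument: the forward implication by pulling back the trivial spin bundle through the covering homomorphism, and the converse by reducing to the maximal compact $\mathrm{SO}(3)$ via a global timelike direction, identifying the primary obstruction with $w_{2}$, and killing the top obstruction in $H^{4}(M;\pi_{3}(\mathrm{SO}(3)))\cong H^{4}(M;\mathbb{Z})$ by non-compactness. (Geroch's own 1968 argument is phrased differently, via the homotopy type of a non-compact $4$-manifold and parallelizability of orientable $3$-manifolds, but it is the same mechanism.) Your most valuable observation is one the paper glosses over: as printed, the theorem omits the non-compactness (and time-orientability) hypotheses, and the converse genuinely fails without them --- a closed spin spacetime with $\chi=0$ but $\sigma\neq0$, e.g.\ a connected sum of $K3$ with twelve copies of $S^{1}\times S^{3}$, admits a spin structure yet has $p_{1}=3\sigma\neq0$ and so is not parallelizable. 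Since the de Sitter brane is non-compact and parallelizable, none of this affects the body of the paper, but the statement of the theorem should carry the missing hypotheses.
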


The basis $\left.  \boldsymbol{\gamma}^{\alpha}\right\vert _{p}$ of
$T_{p}M\simeq\mathbb{R}^{1,3},p\in M$, generates the algebra $\mathcal{C}%
\ell(T_{p}M,\mathtt{g})\simeq\mathbb{R}_{1,3}$. We have that \cite{rc2007}%
\[
\mathrm{e}=\frac{1}{2}(1+\boldsymbol{\gamma}^{0})\in\mathbb{R}_{1,3}%
\]
is a primitive idempotent of $\mathbb{R}_{1,3}$ and%
\[
\mathrm{f}=\frac{1}{2}(1+\boldsymbol{\gamma}^{0})\frac{1}{2}%
(1+i\boldsymbol{\gamma}^{2}\boldsymbol{\gamma}^{1})\in\mathbb{C\otimes
R}_{1,3}%
\]
is a primitive idempotent of $\mathbb{C\otimes R}_{1,3}$. Now, let
$I=\mathbb{R}_{1,3}\mathrm{e}$ and $I_{\mathbb{C}}=\mathbb{C\otimes R}%
_{1,3}\mathrm{f}$ be respectively the minimal left ideals of $\mathbb{R}%
_{1,3}$ and $\mathbb{C\otimes R}_{1,3}$ generated by $\mathrm{e}$ and
$\mathrm{f}$. Any $\phi\in I$ can be written as%
\[
\phi=\psi\mathrm{e}%
\]
with $\psi\in\mathbb{R}_{1,3}^{0}$. Analogously, any $\mathbf{\phi}\in
I_{\mathbb{C}}$ can be written as%
\[
\psi\mathrm{e}\frac{1}{2}(1+i\boldsymbol{\gamma}^{2}\boldsymbol{\gamma}^{1})
\]
with $\psi\in\mathbb{R}_{1,3}^{0}.$ Recall moreover that $\mathbb{C\otimes
R}_{1,3}\simeq\mathbb{R}_{4,1}\simeq\mathbb{C}(4)$. We can verify that%
\[
\left(
\begin{array}
[c]{cccc}%
1 & 0 & 0 & 0\\
0 & 0 & 0 & 0\\
0 & 0 & 0 & 0\\
0 & 0 & 0 & 0
\end{array}
\right)
\]
is a primitive idempotent of $\mathbb{C}(4)$ which is a matrix representation
of $\mathrm{f}$. In that way, there is a bijection between column spinors,
i.e., elements of $\mathbb{C}^{4}$ and the elements of $I_{\mathbb{C}}$.

Recalling that $\mathrm{Spin}_{1,3}^{e}\hookrightarrow\mathbb{R}_{1,3}^{0}$,
we give:

\begin{definition}
The left \emph{(}respectively right\emph{)} real spin-Clifford bundle of the
spin manifold $M$ is the vector bundle $\mathcal{C}\ell_{\mathrm{Spin}}%
^{l}(M,\mathtt{g})=P_{\mathrm{Spin}_{1,3}^{e}}(M,\mathtt{g})\times
_{l}\mathbb{R}_{1,3}$ \emph{(}respectively $\mathcal{C}\ell_{\mathrm{Spin}%
}^{r}(M,\mathtt{g})=P_{\mathrm{Spin}_{1,3}^{e}}(M,\mathtt{g})\times
_{r}\mathbb{R}_{1,3}$\emph{)} where $l$ is the representation of
$\mathrm{Spin}_{1,3}^{e}$ on $\mathbb{R}_{1,3}$ given by $l(a)x=ax$
\emph{(}respectively, where $r$ is the representation of $\mathrm{Spin}%
_{1,3}^{e}$ on $\mathbb{R}_{1,3}$ given by $r(a)x=xa^{-1}$\emph{)}. Sections
of $\mathcal{C}\ell_{\mathrm{Spin}}^{l}(M,\mathtt{g})$ are called left
spin-Clifford fields \emph{(}respectively right spin-Clifford fields\emph{)}.
\end{definition}

\begin{definition}
Let $\mathbf{e,f}\in\mathcal{C}\ell_{\mathrm{Spin}_{1,3}^{e}}^{l}%
(M,\mathtt{g})$ be a primitive global idempotents \footnote{We know that
global primitive idempotents exist because $M$ is parallelizable.
\[
\mathbf{e=\mathbf{[(\Xi}_{0},\frac{1}{2}(1+\boldsymbol{\gamma}^{0}%
)\mathbf{)]},f}=\mathbf{[(\Xi}_{0},\frac{1}{2}(1+\boldsymbol{\gamma}^{0}%
)\frac{1}{2}(1+i\boldsymbol{\gamma}^{2}\boldsymbol{\gamma}^{1})\mathbf{)]}%
\]
}, respectively $\mathbf{e^{r}}\in\mathcal{C}\ell_{\mathrm{Spin}_{1,3}^{e}%
}^{r}(M,\mathtt{g})$, and let $I(M,\mathtt{g})$ be the subbundle of
$\mathcal{C}\ell_{\mathrm{Spin}_{1,3}^{e}}^{l}(M,\mathtt{g})$ generated by the
idempotent, that is, if $\mathbf{\Psi}$ is a section of $I(M,\mathtt{g}%
)\subset\mathcal{C}\ell_{\mathrm{Spin}_{1,3}^{e}}^{l}(M,\mathtt{g})$, we have%

\begin{equation}
\mathbf{\Psi e}=\mathbf{\Psi},
\end{equation}

A section $\mathbf{\Psi}$ of $I(M,\mathtt{g})$ is called a left ideal
algebraic spinor field.
\end{definition}

\begin{definition}
A Dirac-Hestenes spinor field \emph{(DHSF)} associated with $\mathbf{\Psi}$ is
a section\footnote{$\mathcal{C}\ell_{\mathrm{Spin}_{1,3}^{e}}^{0l}%
(M,\mathtt{g})$ denotes the even subbundle of $\mathcal{C}\ell_{\mathrm{Spin}%
_{1,3}^{e}}^{l}(M,\mathtt{g})$} $\varPsi$ of $\mathcal{C}\ell_{\mathrm{Spin}%
_{1,3}^{e}}^{0l}(M,\mathtt{g})\subset\mathcal{C}\ell_{\mathrm{Spin}_{1,3}^{e}%
}^{l}(M,\mathtt{g})$ such that\footnote{For any $\mathbf{\Psi}$ the DHSF
always exist, see \cite{rc2007}.}
\begin{equation}
\mathbf{\Psi}=\varPsi\mathbf{e}. \label{DHSF}%
\end{equation}

\end{definition}

\begin{definition}
We denote the complexified left spin-Clifford bundle by%
\[
\mathbb{C}\ell_{\mathrm{Spin}_{1,3}^{e}}^{l}(M,\mathtt{g})=P_{\mathrm{Spin}%
_{1,3}^{e}}(M,\mathtt{g})\times_{l}\mathbb{C}\otimes\mathbb{R}_{1,3}\equiv
P_{\mathrm{Spin}_{1,3}^{e}}(M,\mathtt{g})\times_{l}\mathbb{R}_{1,4}.
\]

\end{definition}

\begin{definition}
An equivalent definition of a \emph{DHSF} is the following. Let $\mathbf{\Psi
}\in\sec\mathbb{C}\ell_{\mathrm{Spin}_{1,3}^{e}}^{l}(M,\mathtt{g})$ such that%
\[
\mathbf{\Psi f=\Psi.}%
\]
Then a \emph{DHSF} associated with $\mathbf{\Psi}$ is an even section
$\varPsi$ of $\mathcal{C}\ell_{\mathrm{Spin}_{1,3}^{e}}^{0l}(M,\mathtt{g}%
)\subset\mathcal{C}\ell_{\mathrm{Spin}_{1,3}^{e}}^{l}(M,\mathtt{g})$ such that%
\begin{equation}
\mathbf{\Psi}=\varPsi\mathbf{f}.
\end{equation}

\end{definition}

\begin{definition}
There are natural pairings:%
\begin{align}
\sec\mathcal{C}\ell_{\mathrm{Spin}_{1,3}^{e}}^{l}(M,\mathtt{g})\times
\sec\mathcal{C}\ell_{\mathrm{Spin}_{1,3}^{e}}^{r}(M,\mathtt{g})  &
\rightarrow\sec\mathcal{C}\ell(M,\mathtt{g}),\label{PAIRING}\\
\sec\mathcal{C}\ell_{\mathrm{Spin}_{1,3}^{e}}^{r}(M,\mathtt{g})\times
\sec\mathcal{C}\ell_{\mathrm{Spin}_{1,3}^{e}}^{l}(M,\mathtt{g})  &
\rightarrow\mathcal{F(}M,\mathbb{R}_{1,3}),
\end{align}
such that given a section $\alpha$ of $\mathcal{C}\ell_{\mathrm{Spin}%
_{1,3}^{e}}^{l}(M,\mathtt{g})$ and a section $\beta$ of $\mathcal{C}%
\ell_{\mathrm{Spin}_{1,3}^{e}}^{r}(M,\mathtt{g})$ and selecting
representatives $(p,a)$ for $\alpha(x)$ and $(p,b)$ for $\beta(x)$
\emph{(}$p\in\pi^{-1}\left(  x\right)  $) it is%
\begin{align}
(\alpha\beta)  &  :=[(p;ab)]\in\mathcal{C}\ell(M,\mathtt{g}),\\
(\beta\alpha)(x)  &  :=ba\in\mathbb{R}_{1,3}.
\end{align}
If alternative representatives $(pu^{-1},ua)$ and $(pu^{-1},bu^{-1})$ are
chosen for $\alpha(x)$ and $\beta(x)$ we have $[(pu^{-1};uabu^{-1})]$, that,
by \emph{Eq.(\ref{eq_cliff})} represents the same element on $\mathcal{C}%
\ell(M,\mathtt{g})$, and $(bu^{-1}ua)=ba$; thus $(\alpha\beta)(x)$ and
$(\beta\alpha)(x)$ are a well defined. Following the same procedure we can
define the actions:%
\begin{align}
\sec\mathcal{C}\ell(M,\mathtt{g})\times\sec\mathcal{C}\ell_{\mathrm{Spin}%
_{1,3}^{e}}^{l}(M,\mathtt{g})  &  \rightarrow\sec\mathcal{C}\ell
_{\mathrm{Spin}_{1,3}^{e}}^{l}(M,\mathtt{g}),\\
\sec\mathcal{C}\ell_{\mathrm{Spin}_{1,3}^{e}}^{r}(M,\mathtt{g})\times
\sec\mathcal{C}\ell(M,\mathtt{g})  &  \rightarrow\sec\mathcal{C}%
\ell_{\mathrm{Spin}_{1,3}^{e}}^{r}(M,\mathtt{g}),\\
\sec\mathcal{C}\ell_{\mathrm{Spin}_{1,3}^{e}}^{l}(M,\mathtt{g})\times
\mathbb{R}_{1,3}  &  \rightarrow\sec\mathcal{C}\ell_{\mathrm{Spin}_{1,3}^{e}%
}^{l}(M,\mathtt{g}),\\
\mathbb{R}_{1,3}\times\sec\mathcal{C}\ell_{\mathrm{Spin}_{1,3}^{e}}%
^{r}(M,\mathtt{g})  &  \rightarrow\sec\mathcal{C}\ell_{\mathrm{Spin}_{1,3}%
^{e}}^{r}(M,\mathtt{g}).
\end{align}

\end{definition}

Given a local trivialization of $\mathcal{C}\ell(M,\mathtt{g})$ (or
$\mathcal{C}\ell_{\mathrm{Spin}_{1,3}^{e}}^{l}(M,\mathtt{g})$, $\mathcal{C}%
\ell_{\mathrm{Spin}_{1,3}^{e}}^{r}(M,\mathtt{g}),$ $\mathbb{C}\ell
_{\mathrm{Spin}_{p,q}^{e}}^{l}(M,\mathtt{g})$) ($\mathcal{U}\subset M$)%
\begin{equation}
\phi_{U}:\pi^{-1}(\mathcal{U})\rightarrow\mathcal{U}\times\mathbb{R}_{1,3},
\end{equation}
we can define a local unit section by $\mathbf{1}_{U}(x)=\phi_{U}^{-1}(x,1)$.
For $\mathcal{C}\ell(M,\mathtt{g})$, it is easy to show that a global unit
section always exist, independently of the fact that $M$ \ is parallelizable
or not. For the bundles $\mathcal{C}\ell_{\mathrm{Spin}_{p,q}^{e}}%
^{l}(M,\mathtt{g})$, $\mathcal{C}\ell_{\mathrm{Spin}_{p,q}^{e}}^{r}%
(M,\mathtt{g})$, $\mathbb{C}\ell_{\mathrm{Spin}_{p,q}^{e}}^{l}(M,\mathtt{g})$
($\dim$ $M=p+q$) there exist a global unit sections if, and only if,
$P_{\mathrm{Spin}_{p,q}^{e}}(M,\mathtt{g})$ is trivial
\cite{r2004,mr02004,rc2007}. In our case we know, that $M$ is parallelizable
and we can define global unit sections on $\mathcal{C}\ell_{\mathrm{Spin}%
_{1,3}^{e}}^{l}(M,\mathtt{g})$, $\mathcal{C}\ell_{\mathrm{Spin}_{1,3}^{e}}%
^{r}(M,\mathtt{g})$ and $\mathbb{C}\ell_{\mathrm{Spin}_{1,3}^{e}}%
^{r}(M,\mathtt{g})$.

Let $\boldsymbol{\mathbf{\Xi}}$$_{u}$ be a section of $P_{\mathrm{Spin}%
_{1,3}^{e}}(M,\mathtt{g})$, i.e., a spin frame. We recall, in order to fix
notations, that sections of $\mathcal{C}\ell(M,\mathtt{g})$ $I^{l}%
(M,\mathtt{g})$, $\mathcal{C}\ell_{\mathrm{Spin}_{1,3}^{e}}^{l}(M,\mathtt{g}%
)$, are, respectively, the equivalence classes%
\begin{gather}
\boldsymbol{C}=[(\boldsymbol{\mathbf{\Xi}}_{u},\mathcal{C}%
_{\boldsymbol{\mathbf{\Xi}}_{u}})],\nonumber\\
\mathbf{\Psi}=[(\boldsymbol{\mathbf{\Xi}}_{u},\boldsymbol{\Psi}%
_{\boldsymbol{\mathbf{\Xi}}_{u}})],~~~\varPsi=[(\boldsymbol{\mathbf{\Xi}}%
_{u},\varPsi_{\boldsymbol{\mathbf{\Xi}}_{u}})]. \label{notation}%
\end{gather}

\begin{remark}
When convenient, we will write $\mathcal{C}_{\Xi_{u}}\in\sec\mathcal{C}%
\ell(M,\mathtt{g})$ to mean that there exists a section $\boldsymbol{C}$ of
the Clifford bundle $\mathcal{C}\ell(M,\mathtt{g})$ defined by
$[(\boldsymbol{\mathbf{\Xi}}_{u},\mathcal{C}_{\boldsymbol{\mathbf{\Xi}}_{u}%
})]$. Analogous notations will be used for sections of the other bundles
introduced above. Also, when there is no chance of confusion on the chosen
spinor frame, we will write $\mathcal{C}_{\boldsymbol{\mathbf{\Xi}}_{u}}$
simply as $\mathcal{C}$.
\end{remark}

For each spin frame, say $\boldsymbol{\mathbf{\Xi}}$$_{0}$, let $\mathbf{1}%
_{\boldsymbol{\mathbf{\Xi}}_{0}}^{l}$ and $\mathbf{1}_{\boldsymbol{\mathbf{\Xi
}}_{0}}^{r}$ be the global unit sections of $\mathcal{C\ell}_{\mathrm{Spin}%
_{1,3}^{e}}^{l}(M,\mathtt{g})$ and $\mathcal{C\ell}_{\mathrm{Spin}_{1,3}^{e}%
}^{r}(M,\mathtt{g})$, given by%

\begin{equation}
\mathbf{1}_{\boldsymbol{\mathbf{\Xi}}_{0}}^{r}:=[(\Xi_{0},1)],~~~~~\mathbf{1}%
_{\boldsymbol{\mathbf{\Xi}}_{0}}^{l}:=[(\Xi_{0},1)]. \label{repre3}%
\end{equation}

\begin{remark}
Before proceeding note that given another spin frame $\Xi_{u}=\Xi_{0}u$, where
$u:M\rightarrow\mathrm{Spin}_{1,3}^{e}\subset\mathbb{R}_{1,3}^{0}%
\subset\mathbb{R}_{1,3}$ we define the sections $\mathbf{1}%
_{\boldsymbol{\mathbf{\Xi}}_{u}}^{r}$ of $\mathcal{C\ell}_{\mathrm{Spin}%
_{1,3}^{e}}^{r}(M,\mathtt{g})$ and $\mathbf{1}_{\boldsymbol{\mathbf{\Xi}}_{0}%
}^{l}$ of $\mathcal{C\ell}_{\mathrm{Spin}_{1,3}^{e}}^{l}(M,\mathtt{g})$ by
\begin{equation}
\mathbf{1}_{\boldsymbol{\mathbf{\Xi}}_{u}}^{r}:=[(\Xi_{u},1)],~~~~~\mathbf{1}%
_{\boldsymbol{\mathbf{\Xi}}_{u}}^{l}:=[(\Xi_{u},1)]. \label{rep4}%
\end{equation}

It has been proved in \emph{\cite{r2004,rc2007}} that the relation between
$\mathbf{1}_{\boldsymbol{\mathbf{\Xi}}}^{r}$ and $\mathbf{1}%
_{\boldsymbol{\mathbf{\Xi}}_{0}}^{r}$ and between $\mathbf{1}%
_{\boldsymbol{\mathbf{\Xi}}}^{l}$ and $\mathbf{1}_{\boldsymbol{\mathbf{\Xi}%
}_{0}}^{l}$\ are given by
\begin{equation}
\mathbf{1}_{\boldsymbol{\mathbf{\Xi}}_{u}}^{r}=u^{-1}\mathbf{1}%
_{\boldsymbol{\mathbf{\Xi}}_{0}}^{r}=\mathbf{1}_{\boldsymbol{\mathbf{\Xi}}%
_{0}}^{r}U^{-1},~~~~~\mathbf{1}_{\boldsymbol{\mathbf{\Xi}}_{u}}^{l}%
=U\mathbf{1}_{\boldsymbol{\mathbf{\Xi}}_{0}}^{l}=\mathbf{1}%
_{\boldsymbol{\mathbf{\Xi}}_{0}}^{l}u \label{rep5}%
\end{equation}
where $U$ is the section of $\mathcal{C}\ell(M,\mathtt{g})$ defined by the
equivalence class
\begin{equation}
U=[(\Xi_{0},u)]. \label{rep6}%
\end{equation}

\end{remark}

The unity sections $\mathbf{1}_{\boldsymbol{\mathbf{\Xi}}_{u}}^{l}$ and
$\mathbf{1}_{\boldsymbol{\mathbf{\Xi}}_{u}}^{r}$satisfies the important
relations\footnote{$\mathcal{C}\ell^{0}(M,\mathtt{g})$ denotes the even
subbundle of $\mathcal{C}\ell(M,\mathtt{g})$.}%

\begin{equation}
\mathbf{1}_{\boldsymbol{\mathbf{\Xi}}_{u}}^{l}\mathbf{1}%
_{\boldsymbol{\mathbf{\Xi}}_{u}}^{r}=1\in\sec\mathcal{C}\ell(M,\mathtt{g}%
),~\mathbf{1}_{\boldsymbol{\mathbf{\Xi}}_{u}}^{r}\mathbf{1}%
_{\boldsymbol{\mathbf{\Xi}}_{u}}^{l}=1\in\mathcal{F(}M,\mathbb{R}_{1,3}),
\label{rep6a}%
\end{equation}

\begin{definition}
A representative of a DHSF $\varPsi$ in the Clifford bundle $\mathcal{C}%
\ell(M,\mathtt{g})$ relative to a spin frame $\mathbf{\Xi}_{u}$ is a section
$\boldsymbol{\psi}_{\mathbf{\Xi}_{u}}=[(\mathbf{\Xi}_{u},\psi_{\mathbf{\Xi
}_{u}})]$ of $\mathcal{C}\ell^{0}(M,\mathtt{g})$\emph{ }given by
\emph{\cite{r2004,mr02004,rc2007}}
\end{definition}

\begin{equation}
\boldsymbol{\psi}_{\mathbf{\Xi}_{u}}=\Psi\mathbf{1}_{\mathbf{\Xi}_{u}}^{r},.
\label{RDHSF}%
\end{equation}
Representatives in the Clifford bundle of $\varPsi$ relative to spin frames,
say $\mathbf{\Xi}_{u^{\prime}}$ and $\mathbf{\Xi}_{u}$, are related
by\footnote{This relation has been used in \cite{r2004} to define a DHSF as an
appropriate equivalence class of even sections of the Clifford bundle
$\mathcal{C\ell}(M,\mathtt{g})$.}%

\begin{equation}
\boldsymbol{\psi}_{\mathbf{\Xi}_{u^{\prime}}}U^{\prime-1}=\boldsymbol{\psi
}_{\mathbf{\Xi}_{u}}U^{-1}. \label{30}%
\end{equation}

In the main text we use the symbol $\phi$ as a short for the representative of
a DHSF in the spinor basis defined by the fiducial frame $\Xi_{0}$.

\textbf{DHSFs} unveil the hidden geometrical meaning of spinors (and spinor
fields). Indeed, consider $v\in\mathbb{R}^{1,3}\hookrightarrow\mathbb{R}%
_{1,3}$ be, initially, a timelike covector\ such that $v^{2}=1.$ The linear
mapping, belonging to $\mathrm{SO}_{1,3}^{e}$%
\begin{equation}
v\mapsto RvR^{-1}=Rv\tilde{R}=w,~R\in\mathrm{Spin}_{1,3}^{e},
\end{equation}
define a new covector $w$ such that $w^{2}=1.$ We can therefore fix a covector
$v$ and obtain all other unit timelike covectors by applying this mapping.
This same procedure can be generalized to obtain any type of timelike covector
starting from a fixed unit covector $v$. We define the linear mapping%
\begin{equation}
v\mapsto\psi v\tilde{\psi}=z \label{rotate}%
\end{equation}
to obtain $z^{2}=\rho^{2}>0$. Since $z$ can be written as $z=\rho Rv\tilde{R}%
$, we need%
\begin{equation}
\psi v\tilde{\psi}=\rho Rv\tilde{R}. \label{ddhsf}%
\end{equation}
If we write \ $\psi=\rho^{\frac{1}{2}}MR$ we need that $Mv\tilde{M}=v$ and the
most general solution is $M=e^{\frac{\tau_{\boldsymbol{g}}\beta}{2}}$, where
$\tau_{\boldsymbol{g}}=\gamma^{0}\gamma^{1}\gamma^{2}\gamma^{3}\in%
{\textstyle\bigwedge\nolimits^{4}}
\mathbb{R}^{1,3}\hookrightarrow\mathbb{R}_{1,3}$ and $\beta\in\mathbb{R}$ is
called the Takabayasi angle \cite{rc2007,jayme}. Then follows that $\psi$ is
of the form%
\begin{equation}
\psi=\rho^{\frac{1}{2}}e^{\frac{\tau_{\boldsymbol{g}}\beta}{2}}R.
\label{takabaiasi}%
\end{equation}

Now, Eq.(\ref{takabaiasi}) shows that $\psi\in\mathbb{R}_{1,3}^{0}%
\simeq\mathbb{R}_{3,0}$. Moreover, we have that $\psi\tilde{\psi}\neq0$ since%
\begin{equation}
\psi\tilde{\psi}=\rho e^{\tau_{\boldsymbol{g}}\beta}=(\rho\cos\beta
)+\tau_{\boldsymbol{g}}(\rho\sin\beta).
\end{equation}

A representative of a DHSF $\varPsi$ in the Clifford bundle $\mathcal{C}%
\ell(M,\mathtt{g})$ relative to a spin frame $\mathbf{\Xi}_{u}$ is a section
$\boldsymbol{\psi}_{\mathbf{\Xi}_{u}}=[(\mathbf{\Xi}_{u},\psi_{\mathbf{\Xi
}_{u}})]$ of $\mathcal{C}\ell^{0}(M,\mathtt{g})$ where $\psi_{\mathbf{\Xi}%
_{u}}\in\mathbb{R}_{1,3}^{0}\simeq\mathbb{R}_{3,0}$. So a DHSF such
$\boldsymbol{\psi}_{\mathbf{\Xi}_{u}}\boldsymbol{\tilde{\psi}}_{\mathbf{\Xi
}_{u}}\neq0$ induces a linear mapping induced by Eq.(\ref{rotate}), which
\emph{rotates} a covector field and \emph{dilate} it.

\section{Description of the Dirac Equation in the Clifford Bundle}

To fix the notation let $(M\simeq\mathbb{R}^{4},\boldsymbol{\eta}%
,D,\tau_{\mathbf{\eta}})$ be the Minkowski spacetime structure where
$\boldsymbol{\eta}\in\sec T_{0}^{2}M$ is Minkowski metric and $D$ is the
Levi-Civita connection of $\boldsymbol{\eta}$. Also, $\tau_{\mathbf{\eta}}%
\in\sec%
{\textstyle\bigwedge\nolimits^{4}}
T^{\ast}M$ defines an orientation. We denote by $\mathtt{\eta}\in\sec
T_{2}^{0}M$ the metric of the cotangent bundle. It is defined as follows. Let
$\{x^{\mu}\}$ be coordinates for\ $M$ in the Einstein-Lorentz-Poincar\'{e}
gauge \cite{rc2007}. Let $\{\boldsymbol{e}_{\mu}=\partial/\partial x^{\mu}\}$
a basis for $TM$ and $\{\gamma^{\mu}=dx^{\mu}\}$ the corresponding dual basis
for $T^{\ast}M$, i.e., $\gamma^{\mu}(\boldsymbol{e}_{\alpha})=\delta_{\alpha
}^{\mu}$. Then, if $\boldsymbol{\eta}=\eta_{\mu\nu}\gamma^{\mu}\otimes
\gamma^{\nu}$ then $\mathtt{\eta}=\eta^{\mu\nu}\boldsymbol{e}_{\mu}%
\otimes\boldsymbol{e}_{\nu}$, where the matrix with entries $\eta_{\mu\nu}$
and the one with entries $\eta^{\mu\nu}$ are the equal to the diagonal matrix
$\mathrm{diag}(1,-1,-1,-1)$. If $a,b\in\sec%
{\textstyle\bigwedge\nolimits^{1}}
T^{\ast}M$\ we write $a\cdot b=\mathtt{\eta}(a,b)$. We also denote by
$\langle\gamma_{\mu}\rangle$ the reciprocal basis of $\{\gamma^{\mu}=dx^{\mu
}\}$, which satisfies $\gamma^{\mu}\cdot\gamma_{\nu}=\delta_{\nu}^{\mu}$.

We denote the Clifford bundle of differential forms\footnote{We recall that
$\mathcal{C\ell}(T_{x}^{\ast}M,\eta)\simeq\mathbb{R}_{1,3}$ the so-called
spacetime algebra. Also the even subalgebra of $\mathbb{R}_{1,3}$ denoted
$\mathbb{R}_{1,3}^{0}$ is isomorphic to te Pauli algebra $\mathbb{R}_{3,0}$,
i.e., $\mathbb{R}_{1,3}^{0}\simeq\mathbb{R}_{3,0}$. The even subalgebra of the
Pauli algebra $\mathbb{R}_{3,0}^{0}:=\mathbb{R}_{3,0}^{00}$ is the quaternion
algebra $\mathbb{R}_{0,2}$, i.e., $\mathbb{R}_{0,2}\simeq\mathbb{R}_{3,0}^{0}%
$. Moreover we have the identifications: $\mathrm{Spin}_{1,3}^{0}%
\simeq\mathrm{Sl}(2,\mathbb{C})$, $\mathrm{Spin}_{3,0}\simeq\mathrm{SU}(2)$.
For the Lie algebras of these groups we have $\mathrm{spin}_{1,3}^{0}%
\simeq\mathrm{sl}(2,\mathbb{C})$,$\ \mathrm{su}(2)\simeq\mathrm{spin}_{3,0}$.
The important fact to keep in mind for the understanding of some of the
identificastions we done below is that $\mathrm{Spin}_{1,3}^{0},\mathrm{spin}%
_{1,3}^{0}\subset\mathbb{R}_{3,0}\subset\mathbb{R}_{1,3}$ and $\mathrm{Spin}%
_{3,0},\mathrm{spin}_{3,0}\subset\mathbb{R}_{0,2}\subset\mathbb{R}_{1,3}%
^{0}\subset\mathbb{R}_{1,3}$.} in Minkowski spacetime by $\mathcal{C\ell
}(M,\eta)$ and use notations and conventions in what follows as in
\cite{rc2007} and recall the fundamental relation%
\begin{equation}
\gamma^{\mu}\gamma^{\nu}+\gamma^{\nu}\gamma^{\mu}=2\eta^{\mu\nu}. \label{1}%
\end{equation}

If $\{\boldsymbol{\gamma}^{\mu},~$ $\mu=0,1,2,3\}$ are the Dirac gamma
matrices in the \emph{standard representation} and $\{\gamma_{\mu}%
,~\mu=0,1,2,3\}$ are as introduced above, we define%
\begin{align}
\sigma_{k}  &  :=\gamma_{k}\gamma_{0}\in\sec%
{\textstyle\bigwedge\nolimits^{2}}
T^{\ast}M\hookrightarrow\sec\mathcal{C\ell}^{0}(M,\eta)\text{, }%
k=1,2,3,\label{2}\\
\mathbf{i}  &  =\gamma_{5}:=\gamma_{0}\gamma_{1}\gamma_{2}\gamma_{3}\in\sec%
{\textstyle\bigwedge\nolimits^{4}}
T^{\ast}M\hookrightarrow\sec\mathcal{C\ell}(M,\eta),\\
\boldsymbol{\gamma}_{5}  &  :=\boldsymbol{\gamma}_{0}\boldsymbol{\gamma}%
_{1}\boldsymbol{\gamma}_{2}\boldsymbol{\gamma}_{3}\in\mathbb{C(}4\mathbb{)}%
\end{align}

Noting that $M$ is parallelizable, in a given global spin frame a covariant
spinor field can be taken as a mapping $\boldsymbol{\psi}:M\rightarrow
\mathbb{C}^{4}$ \ In standard representation of the gamma matrices where
($i=\sqrt{-1}$, $\boldsymbol{\phi},\boldsymbol{\varsigma}:M\rightarrow
\mathbb{C}^{2}$) to $\boldsymbol{\psi}$ given by
\begin{equation}
\boldsymbol{\psi}=\left(
\begin{array}
[c]{c}%
\boldsymbol{\phi}\\
\boldsymbol{\varsigma}%
\end{array}
\right)  =\left(
\begin{array}
[c]{c}%
\left(
\begin{array}
[c]{c}%
m^{0}+im^{3}\\
-m^{2}+im^{1}%
\end{array}
\right) \\
\left(
\begin{array}
[c]{c}%
n^{0}+in^{3}\\
-n^{2}+in^{1}%
\end{array}
\right)
\end{array}
\right)  , \label{3}%
\end{equation}
there corresponds the \textbf{DHSF} $\psi\in\sec\mathcal{C\ell}^{0}(M,\eta)$
given by\footnote{Remember the identification:%
\[
\mathbb{C}(4)\simeq\mathbb{R}_{4,1}\supseteq\mathbb{R}_{4,1}^{0}%
\simeq\mathbb{R}_{1,3}.
\]
}%
\begin{equation}
\psi=\phi+\varsigma\sigma_{3}=(m^{0}+m^{k}\mathbf{i}\sigma_{k})+(n^{0}%
+n^{k}\mathbf{i}\sigma_{k})\sigma_{3}. \label{4}%
\end{equation}
We then have the useful formulas in Eq.(\ref{5}) below that one can use to
immediately translate results of the standard matrix formalism in the language
of the Clifford bundle formalism and vice-versa\footnote{$\tilde{\psi}$ is the
reverse of $\psi$. If $A_{r}\in\sec%
{\textstyle\bigwedge\nolimits^{r}}
T^{\ast}M\hookrightarrow\sec\mathcal{C\ell}(M,\eta)$ then $\tilde{A}%
_{r}=(-1)^{\frac{r}{2}(r-1)}A_{r}$.}
\begin{align}
\boldsymbol{\gamma}_{\mu}\boldsymbol{\psi}  &  \leftrightarrow\gamma_{\mu}%
\psi\gamma_{0},\nonumber\\
i\boldsymbol{\psi}  &  \leftrightarrow\psi\gamma_{21}=\psi\mathbf{i}\sigma
_{3},\nonumber\\
i\boldsymbol{\gamma}_{5}\boldsymbol{\psi}  &  \leftrightarrow\psi\sigma
_{3}=\psi\gamma_{3}\gamma_{0},\nonumber\\
\boldsymbol{\bar{\psi}}  &  =\boldsymbol{\psi}^{\dagger}\boldsymbol{\gamma
}^{0}\leftrightarrow\tilde{\psi},\nonumber\\
\boldsymbol{\psi}^{\dagger}  &  \leftrightarrow\gamma_{0}\tilde{\psi}%
\gamma_{0},\nonumber\\
\boldsymbol{\psi}^{\ast}  &  \leftrightarrow-\gamma_{2}\psi\gamma_{2}.
\label{5}%
\end{align}

Using the above dictionary the standard Dirac equation\footnote{$\partial
_{\mu}:=\frac{\partial}{\partial x^{\mu}}$.} for a Dirac spinor field
$\boldsymbol{\psi}:M\rightarrow\mathbb{C}^{4}$
\begin{equation}
i\boldsymbol{\gamma}^{\mu}\partial_{\mu}\boldsymbol{\psi}-m\boldsymbol{\psi
}=0\label{6}%
\end{equation}
translates immediately in the so-called Dirac-Hestenes equation, i.e.,
\begin{equation}
\boldsymbol{\partial}\psi\gamma_{21}-m\psi\gamma_{0}=0.\label{7}%
\end{equation}

\section{Generalized Dirac-Hestenes Spinors for $\mathbb{R}_{4,1}$}

Let $\{\boldsymbol{E}^{A}\}$, $a=1,2,3,4,0$ be an orthonormal basis for
$\mathbb{R}_{4,1}$. We have
\begin{equation}
\boldsymbol{E}^{A}\boldsymbol{E}^{B}+\boldsymbol{E}^{B}\boldsymbol{E}%
^{A}=2\eta^{AB}\label{411}%
\end{equation}
where the matrix with entries $\eta^{AB}$ is the diagonal matrix
\textrm{diag}$(1,1,1,1,-1).$Define $\mathfrak{i:=}\boldsymbol{E}%
^{0}\boldsymbol{E}^{1}\boldsymbol{E}^{2}\boldsymbol{E}^{3}\boldsymbol{E}^{4}$
and $\Gamma^{\mu}=\boldsymbol{E}^{\mu}\boldsymbol{E}^{4}$. Then,
\begin{equation}
\Gamma^{\mu}\Gamma^{\nu}+\Gamma^{\nu}\Gamma^{\mu}=2\eta^{\mu\nu}\label{412}%
\end{equation}
where the matrix with entries $\eta^{\mu\nu}$ is the diagonal matrix
\textrm{diag}$(1,-1,-1,-1)$.\ Recalling that $\mathbb{R}_{4,1}\simeq
\mathbb{C\otimes R}_{1,3}\simeq\mathbb{C\otimes R}_{3,1}$ and that
\begin{equation}
f=\frac{1}{2}(1+\Gamma^{0})\frac{1}{2}(1+\mathfrak{i}\Gamma^{2}\Gamma
^{1})\label{413}%
\end{equation}
is a primitive idempotent of $\mathbb{R}_{4,1}$ and $I=\mathbb{R}_{4,1}f$ is a
minimum ideal of $\mathbb{R}_{4,1}$ such that $\dim_{\mathbb{R}}I=8$. Let
$\Psi\in\mathbb{R}_{4,1}f$ and $Z\in\mathbb{R}_{4,1}^{0}\frac{1}{2}%
(1+\Gamma^{0})$. Since $\mathbb{R}_{4,1}^{0}\simeq\mathbb{R}_{1,3}$ we
recognize $Z$ as isomorphic to some covariant Dirac spinor. We can easily find
the following relation between $\Psi$ and $Z$%
\begin{equation}
\Psi=Z\frac{1}{2}(1+\mathfrak{i}\Gamma^{2}\Gamma^{1}).\label{414}%
\end{equation}
Moreover, decomposing $Z$ into even and odd parts relative to the
$\mathbb{Z}_{2}$-gradation of $\mathbb{R}_{4,1}^{0}\simeq\mathbb{R}_{1,3}$,
$Z=Z^{0}+Z^{1}$ we find that $Z^{1}=Z^{0}\Gamma^{0}$, which clearly shows that
all information in $Z$ is contained in $Z^{0}$. Then, we have
\begin{equation}
\Psi=Z^{0}(1+\Gamma^{0})\frac{1}{2}(1+\mathfrak{i}\Gamma^{2}\Gamma
^{1})\label{414a}%
\end{equation}

Now, taking into account the well known result that
\begin{equation}
\mathbb{R}_{4,1}^{0}\frac{1}{2}(1+\Gamma^{0})=\mathbb{R}_{4,1}^{00}\frac{1}%
{2}(1+\Gamma^{0})\label{415}%
\end{equation}
where $\mathbb{R}_{4,1}^{00}\simeq\mathbb{R}_{1,3}^{0}$ is the even subalgebra
of $\mathbb{R}_{4,1}^{0}\simeq\mathbb{R}_{1,3}$ we see that $Z\in
\mathbb{R}_{4,1}^{0}\frac{1}{2}(1+\Gamma^{0})$ can be written as%
\begin{equation}
Z=\psi\frac{1}{2}(1+\Gamma^{0}),~~~~\label{416}%
\end{equation}
with $\psi\in\mathbb{R}_{4,1}^{00}\simeq\mathbb{R}_{1,3}^{0}$ a representative
of a Dirac-Hestenes spinor. Thus, putting $Z^{0}=\psi/2$ we end with the
notable expression\footnote{If you need more details consult Section 3.4 of
\cite{rc2007}.}%
\begin{equation}
\Psi=\psi\frac{1}{2}(1+\Gamma^{0})\frac{1}{2}(1+\mathfrak{i}\Gamma^{2}%
\Gamma^{1}).\label{417}%
\end{equation}

Next note that Hestenes (see page 106 of \cite{hs1984}) defines a spinor for
$\mathbb{R}_{4,1}$the object
\begin{equation}
\boldsymbol{\Phi}=\zeta^{1/2}\mathbf{V},\label{418}%
\end{equation}
where $\zeta$ is a scalar and $\mathbf{V}\in\mathbb{R}_{4,1}^{0}$ and
$\mathbf{V\tilde{V}}=1$ such that for any $\mathbf{x}\in\mathbb{R}_{4,1}$,
$\Upsilon\mathbf{x}\Upsilon\in\mathbb{R}_{4,1}$. To relate $\Upsilon$ with
$\psi$ we write
\begin{equation}
\boldsymbol{\Phi=}\psi\frac{1}{2}(1+\Gamma^{0})+\mathbf{\xi}^{\alpha}%
\Gamma_{0}\Gamma_{\alpha}\in\mathbb{R}_{4,1}^{0}.\label{419}%
\end{equation}

\begin{remark}
\label{XX}Take notice that the $\mathbf{\xi}^{\alpha}$ are four scalars which
must satisfy some link in order to define the additional $3$ degrees of
freedom of $\boldsymbol{\Phi}$ and $\psi\in\mathbb{R}_{4,1}^{00}%
\simeq\mathbb{R}_{1,3}^{0}$ such that $\psi\tilde{\psi}\neq0$ is written as
\begin{equation}
\psi=\mathbf{S\ +}\frac{1}{2}\mathbf{B}_{\mu\nu}\Gamma^{\mu}\Gamma^{\nu
}+\mathbf{P}\Gamma^{0}\Gamma^{1}\Gamma^{2}\Gamma^{3}=\mathbf{\rho}%
^{1/2}e^{-\frac{\beta\Gamma^{0}\Gamma^{1}\Gamma^{2}\Gamma^{3}}{2}}%
\mathbf{U}\label{420}%
\end{equation}
with $\mathbf{U\in}\mathbb{R}_{1,3}^{0}$ such that $\mathbf{U\tilde{U}=1}$.
Moreover, since $\boldsymbol{\Phi\tilde{\Phi}}=\zeta$ we need the additional
link
\begin{equation}
\left[  \psi\frac{1}{2}(1+\Gamma^{0})+\mathbf{\xi}^{\alpha}\Gamma_{0}%
\Gamma_{\alpha}\right]  \left[  \frac{1}{2}\tilde{\psi}(1-\Gamma
^{0})-\mathbf{\xi}^{\alpha}\Gamma_{0}\Gamma_{\alpha}\right]  =\zeta
.\label{421}%
\end{equation}

\end{remark}

\section{Heuristic Derivation of the DHE in Minkowski Spacetime}

We start recalling that a classical spin $1/2$ free particle is supposed to
have its story described by a geodesic timelike worldline $\sigma
:\mathbb{R}\supset I\rightarrow M(\approx\mathbb{R}^{4})$ in the Minkowski
spacetime structure Let $\sigma_{\ast}$ be the velocity of the particle and
let $\boldsymbol{v}=\boldsymbol{\eta}(\sigma_{\ast},)$ be the
\textit{physically equivalent} 1-form. We that $\boldsymbol{v}\in\sec
T_{\sigma}^{\ast}M\hookrightarrow\sec\mathcal{C}{\ell}(M,\mathtt{\eta})$. Its
classical momentum $1$-form is%
\begin{equation}
\boldsymbol{p}=m\boldsymbol{v.} \label{ncdh0}%
\end{equation}

To continue, we suppose the existence of a 1-form field $V\in\sec\bigwedge
^{1}T^{\ast}M\hookrightarrow\sec\mathcal{C}{\ell}(M,\mathtt{g})$ such that its
restriction over $\sigma$ is $\boldsymbol{v}$, i.e., $V_{\mid\sigma
}=\boldsymbol{v}$. Also we impose that $V^{2}=1$. We introduce also the
$\boldsymbol{P}$ vector field such that $\left.  \boldsymbol{P}\right\vert
_{\sigma}=\boldsymbol{p}$ and consider the equation%
\begin{equation}
\boldsymbol{P}=m\boldsymbol{V} \label{ncdh01}%
\end{equation}

As in the previous appendix, let $\psi\in\sec\mathcal{C}{\ell}^{0}%
(M,\mathtt{g})$, be the representative (in the spin coframe $\Xi$) of a
\textit{particular} invertible Dirac-Hestenes spinor field such that%
\begin{equation}
\psi\tilde{\psi}\neq0 \label{ncdh5}%
\end{equation}
\ and since $\psi=\rho^{\frac{1}{2}}e^{\frac{\beta}{2}\gamma^{5}}R$ we have%
\begin{equation}
\psi\gamma^{0}\tilde{\psi}=\rho e^{\beta\gamma^{5}}R\gamma^{0}\tilde{R}.
\label{ncdh6}%
\end{equation}
which necessarily implies that if we want $\ V=\psi\gamma^{0}\tilde{\psi}$ we
need
\begin{equation}
e^{\beta\gamma^{5}}=\pm1, \label{ncdh00}%
\end{equation}
i.e., $\beta=0$ or $\beta=\pi$\footnote{These values correspond to charges of
opposite signs, see \cite{rc2007}, Chapter}. In what follows we take $\beta
=0$. Thus Eq.(\ref{ncdh01}) becomes%
\begin{equation}
\boldsymbol{P}=m\psi\gamma^{0}\tilde{\psi} \label{ncdh02}%
\end{equation}
and thus
\begin{equation}
\boldsymbol{P}\psi=m\psi\gamma^{0} \label{ncdh03}%
\end{equation}

Eq.(\ref{ncdh03}) is a purely classical equation which is simply another way
of writing Eq.(\ref{ncdh01}). To get a quantum mechanics \ wave equation we
must now change $\boldsymbol{P}$ into $\mathbf{P}$, the quantum mechanics
momentum operator. From the previous appendix we know that
\begin{equation}
\mathbf{P}\psi=\boldsymbol{\partial}\psi\gamma^{2}\gamma^{1}=\gamma^{\mu
}\partial_{\mu}\psi\gamma^{2}\gamma^{1}. \label{ncdh04}%
\end{equation}

Substituting this result in Eq.(\ref{ncdh03}) we get (compare with Eq.(23) of
\cite{roldao3})
\begin{equation}
\boldsymbol{\partial}\psi\gamma^{2}\gamma^{1}-m\psi\gamma^{0}=0.
\label{ncdh05}%
\end{equation}
which is the DHE, which, as well known, is completely equivalent to the
standard Dirac equation formulated in terms of covariant Dirac spinor fields.

\end{document}